\renewcommand{\P}{\bbbp}
\newcommand{\coNP}{\mathbf{coNP}}
\newcommand{\aout}{\mathtt{Out}}
\newcommand{\ft}{\preceq}
\newcommand{\R}{\bbbr}
\newcommand{\N}{\bbbn}
\newcommand{\set}[1]{\left\{ #1 \right\}}
\newcommand{\A}{\mathcal{A}}
\newcommand{\B}{\mathcal{B}}
\newcommand{\C}{\mathcal{C}}
\newcommand{\Convex}{\mathrm{Convex}}
\newcommand{\Conv}{\mathrm{Conv}}
\newcommand{\M}{\mathcal{M}}
\newcommand{\D}{\mathcal{D}}
\newcommand{\wrt}[1]{\mathop{\mathrm{d}#1}}
\title{Timed Comparisons of Semi-Markov Processes\thanks{This work was supported by the Alan Turing Institute under the EPSRC grant EP/N510129/1, as well as by the
Danish FTP project ASAP, the ERC Advanced Grant LASSO, and the Sino-Danish Basic Research Center IDEA4CPS.}}
\author{Mathias R. Pedersen\inst{1} \and Nathana\"{e}l Fijalkow\inst{2} \and Giorgio Bacci \inst{1} \and Kim G. Larsen\inst{1} \and Radu Mardare\inst{1}}
\institute{Department of Computer Science, Aalborg University, Denmark \and The Alan Turing Institute, London, United Kingdom}
\begin{document}

\maketitle

\begin{abstract}
  Semi-Markov processes are Markovian processes
  in which the firing time of the transitions is modelled by probabilistic distributions over positive reals
  interpreted as the probability of firing a transition at a certain moment in time.

  In this paper we consider the trace-based semantics of semi-Markov processes, 
  and investigate the question of how to compare two semi-Markov processes with respect to their time-dependent behaviour.
  To this end, we introduce the relation of being ``faster than'' between processes and study its algorithmic complexity.
  Through a connection to probabilistic automata we obtain hardness results showing in particular that this relation is undecidable.
  However, we present an additive approximation algorithm for a time-bounded variant of the faster-than problem 
  over semi-Markov processes with slow residence-time functions, and a $\coNP$ algorithm 
  for the exact faster-than problem over unambiguous semi-Markov processes.
\end{abstract}

\section{Introduction}
Semi-Markov processes are Markovian stochastic systems
that model the firing time of transitions as probabilistic distribution over positive reals;
thus, one can encode the probability of firing a certain transition within a certain time interval.
For example, continuous-time Markov processes are particular case of semi-Markov processes where the timing distributions are always exponential.

Semi-Markov processes have been used extensively to model real-time systems such as power plants~\cite{PST97} and power supply units~\cite{PTV04}.
For such real-time systems, non-functional requirements are becoming increasingly important.
Many of these requirements, such as response time and throughput, depend heavily on the timing behaviour of the system in question.
It is therefore natural to understand and be able to compare the timing behaviour of different systems. 
 
Moller and Tofts~\cite{MT91} proposed the notion of a \emph{faster-than} relation for systems with discrete-time in the context of process algebras.
Their goal was to be able to compare processes that are functionally behaviourally equivalent,
except that one process may execute actions faster than the other.
This line of study was continued by L{\"{u}}ttgen and Vogler~\cite{LV01}, who moreover considered
upper bounds on time, in order to allow for reasoning about worst-case timing behaviours.
For timed automata, Guha et al.~\cite{GNA12} introduced a bisimulation-like faster-than relation
and studied its compositional properties.
For continuous-time probabilistic systems, Baier et al.~\cite{BKHW05} considered a simulation relation
where the timing distribution on each state is required to stochastically dominate the other.
They introduced both a weak and a strong version of their simulation relation,
and gave a logical characterization of these in terms of the logic CSL.

In the literature, less attention has been drawn to trace-based notions of faster-than relations
although trace equivalence and inclusion are important concepts when considering linear-time properties such as liveness or safety~\cite{BK08}.
In this paper we propose a simple and intuitive notion of trace 
inclusion for semi-Markov processes, which we call \emph{faster-than} relation, that compares the relative speed of
processes with respect to the execution of arbitrary sequences of actions.

Differently from trace inclusion, 
our relation does not make a step-wise comparison of the timing delays for each individual action in a sequence, but over the overall execution time of the sequence. 
As an example, consider the semi-Markov process in Fig. \ref{fig:faster-than}.
The states $s$ and $s'$, although performing the same sequences of actions,
are not related by trace inclusion because the first two actions in any sequence
are individually executed at opposite order of speeds (here governed by exponential-time distributions).
Instead, according to our relation, $s$ \emph{is} faster-than $s'$ (but not vice versa) because it executes single-action sequences at a faster rate than $s'$,
and action sequences of length greater than one at the same speed ---this is due to the fact that the execution time of each action is governed by random variables
that are independent of each other and the sum of independent random variables is commutative.

\begin{figure}
  \centering
%
\includegraphics[scale=.3]{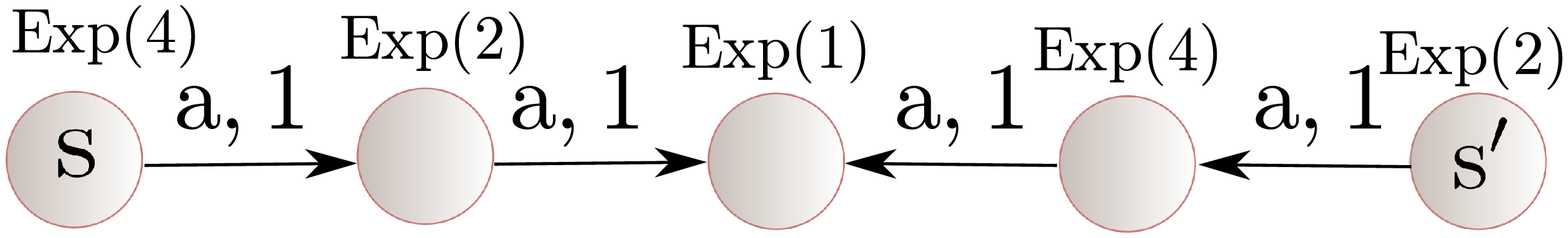}
  \caption{A semi-Markov process where $s$ is faster than $s'$. The states of the process are annotated with their timing distributions 
           and each action-labelled transition is decorated with its probability to be executed.}
  \label{fig:faster-than}
\end{figure}

%
%
%
 
In this paper we investigate the algorithmic complexity of various problems regarding the faster-than relation,
emphasising their connection with classical algorithmic problems over Rabin's probabilistic automata.
In particular, we prove that the faster-than problem over generic semi-Markov processes is undecidable and that it is 
Positivity-hard when restricted to processes with only one action label. The reduction from the Positivity
problem is important because it relates the faster-than problem to the Skolem problem,
an important problem in number theory, whose decidability status has been an open problem for 
at least 80 years~\cite{OW14,AAOW15}.

We show that undecidability for the faster-than problem can not be tackled even by approximation techniques:
via the same connection with probabilistic automata we are able to prove that the faster-than 
problem can not be approximated up to a multiplicative constant.
However, as a positive result, we show that a time-bounded variant of the faster-than problem,
which compares processes up to a given finite time bound, although still undecidable,
admits approximated solutions up to an \emph{additive} constant over semi-Markov processes with slow residence-time distributions.
These include the important cases of uniform and exponential distributions.

Finally, we present a $\coNP$ algorithm for solving the faster-than problem exactly over unambiguous semi-Markov processes,
where a process is unambiguous if every transition to a next state is unambiguously determined by the label that it outputs.

%


\section{Definitions}
For a finite set $S$ we let $\D(S)$ denote the set of (sub)distributions over $S$, i.e. functions $\delta : S \to [0,1]$
such that $\sum_{s \in S} \delta(s) \le 1$.
The subset of total distributions is $\D_{=1}(S)$.

We let $\bbbn$ denote the natural numbers and $\R_{\ge 0}$ denote the non-negative real numbers.
We equip $\R_{\ge 0}$ with the Borel $\sigma$-algebra $\B$,
so that $(\R_{\ge 0}, \B)$ is a measurable space.
Let $\D(\R_{\ge 0})$ denote the set of (sub)distributions over $(\R_{\ge 0}, \B)$,
i.e. measures $\mu : \B \to [0,1]$ such that $\mu(\R_{\ge 0}) \le 1$.
Throughout the paper we will write $\mu(t)$ for $\mu([0,t])$.

To avoid confusion we will refer to $\mu$ in $\D(\R_{\ge 0})$ as timing distributions,
and to $\delta$ in $\D(S)$ as distributions.

\begin{definition}[Semi-Markov process]
A \emph{semi-Markov process}, usually written $\M$, is given by:
  \begin{itemize}
    \item $S$ is a (finite) set of \emph{states},
    \item $\aout$ is a (finite) set of \emph{output labels},
    \item $\Delta : S \to \D(S \times \aout)$ is a \emph{transition function},
    \item $\rho : S \to \D(\R_{\ge 0})$ is a \emph{residence-time function}.
  \end{itemize}
\end{definition}

The operational behaviour of a semi-Markov process can be described as follows.
In a given state $s \in S$,
the process fires a transition within time $t$ with probability $\rho(s)(t)$,
leading to the state $s' \in S$ while outputting the label $a \in \aout$ with probability $\Delta(s)(s',a)$.

We aim at defining $\P_\M(s,w,t)$, 
the probability that from the state~$s$,
the output of the semi-Markov process $\M$ within time $t$ \emph{starts with} the word $w$.
It is important to note here that time is accumulated:
we sum together the time spent in all states along the way,
and ask that this total time is less than the specified bound $t$.
A full and formal definition of the probability can be done through the usual cylinder construction.
However, we will spare the reader this well-known construction and give seemingly ad-hoc definitions in this conference version.

In order to account for the accumulated time in the probability,
we need the notion of convolution. The convolution of two timing distributions $\mu$ and $\nu$ is $\mu * \nu$ defined by
\[
(\mu * \nu)(E) = \int_0^\infty \nu(E - x) \mu(\wrt{x})
\]
for any Borel set $E$.
Convolution is both associative and commutative.
Let $X$ and $Y$ be two independent random variables with timing distributions $\mu$ and $\nu$,
i.e. $\P(X \in E) = \mu(E)$ and $\P(Y \in E) = \nu(E)$,
then 
\[
\P(X + Y \in E) = (\mu * \nu)(E) \enspace.
\]


\begin{definition}[Probability]
  Consider a semi-Markov process $\M$. 
  We define the timing distribution $\P_\M(s,w)$ inductively:
  $\P_\M(s,\varepsilon) = \bbbone$ for the empty word $\varepsilon$,
  where $\bbbone$ is the function such that $\bbbone(t) = 1$ for all $t$ in $\R_{\ge 0}$,
  and for a word $w$ in $\aout^*$, a letter $a$ in $\aout$ and a state $s$,
  \[\P_\M(s,aw) = \sum_{s' \in S} \Delta(s)(s',a) \cdot \left(\rho(s) * \P_\M(s',w)\right) \enspace.\]
  We will then write $\P_\M(s,w,t)$ to mean $\P_\M(s,w)(t)$.
\end{definition}


\subsection*{Timed Comparisons}
We introduce the following relation which will be the focus of our paper.

\begin{definition}[Faster-than relation]
  Consider a semi-Markov process $\M$ and two states $s$ and $s'$.
  We say that $s$ is \emph{faster than} $s'$, denoted $s \ft s'$,
  if for all $w$, for all $t$,
  \[
  \P_\M(s,w,t) \ge \P_\M(s',w,t) \enspace.
  \]
\end{definition}

The algorithmic problem we consider in this paper is the \emph{faster-than problem}: 
given a semi-Markov process and two states $s$ and $s'$, determine whether $s \ft s'$.

\subsection*{Algorithmic Considerations}

The definition we use for semi-Markov processes is very general, because we allow for any residence-time function.
The aim of the paper is to give generic algorithmic results which apply to \emph{effective} classes of timing distributions,
a notion we define now.
Recall that a residence-time function associates with each state a timing distribution.
We first give some examples of classical timing distributions.
\begin{itemize}
	\item The prime example is exponential distributions, defined by the timing distribution
	$\mu(t) = 1 - e^{-\lambda t}$ for some parameter $\lambda > 0$ usually called the rate.
	\item Another interesting example is piecewise polynomial distributions. Consider finitely many polynomials $P_1,\ldots,P_n$
	and a finite set of pairwise disjoint intervals $I_1 \cup I_2 \cup \cdots \cup I_n$ covering $[0,\infty)$ such that 
	for every $k$, $P_k$ is non-negative over $I_k$ and $\sum_k \int_{I_k} P_k = 1$.
	This induces the timing distribution
	\[
	\mu(t) = \sum_k \int_{I_k \cap [0,t]} P_k(t) \enspace.
	\]
	\item A special case of the previous example is given by piecewise affine distributions, where the polynomials are affine functions.
	\item Another important special case of piecewise polynomial distributions are the uniform distributions with parameters $0 \leq a < b$ 
	defining the timing distribution
	\[
	\mu(t) = 
	\begin{cases}
	1 & \mbox{if } t < a, \\
	\frac{t-a}{b-a} & \mbox{if } t \in [a,b) \\ 
	0 & \mbox{if } x \ge b \enspace.
	\end{cases}
	\]
	\item The simplest example is given by Dirac distributions defined for the parameter $a$ by 
	$\mu(E) = 1$ if $a$ is in $E$, and $0$ otherwise.
\end{itemize}

The following definition captures these examples, and more.
For a class $\C$ of timing distributions, we let $\Convex(\C)$ be the smallest class of timing distributions containing $\C$ 
and closed under convex combinations, 
and similarly $\Conv(\C)$ adding closure under convolutions.

\begin{lemma}
  \label{lem:convC}
  Let $\C$ be a class of timing distributions.
  Consider a semi-Markov process $\M$ whose residence-time function uses timing distributions from $\C$,
  a state $s$ and a word $w$, then $\P_\M(s,w) \in \Conv(\C)$.
\end{lemma}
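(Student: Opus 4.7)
The proof should proceed by induction on the length of the word $w$. The structure of the inductive definition of $\P_\M(s,w)$ mirrors exactly the two closure operations defining $\Conv(\C)$: the outer summation weighted by $\Delta(s)(s',a)$ is a (sub)convex combination, and the factor $\rho(s) * \P_\M(s',w)$ is a convolution of a distribution in $\C$ with something produced by the inductive hypothesis.

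For the base case $w = \varepsilon$, one has $\P_\M(s,\varepsilon) = \bbbone$, which is the Dirac timing distribution at $0$, i.e.\ the identity for convolution; I would simply record this as an element of $\Conv(\C)$ since any class closed under convolution can be taken to contain the convolution identity without loss of generality. For the inductive step, writing $w = aw'$ and using the definition of $\P_\M$, I would compute
\[
\P_\M(s,aw') = \sum_{s' \in S} \Delta(s)(s',a) \cdot \bigl(\rho(s) * \P_\M(s',w')\bigr).
\]
The inductive hypothesis applied to the shorter word $w'$ gives $\P_\M(s',w') \in \Conv(\C)$ for every state $s'$. Since $\rho(s) \in \C \subseteq \Conv(\C)$ and $\Conv(\C)$ is closed under convolutions, each term $\rho(s) * \P_\M(s',w')$ lies in $\Conv(\C)$. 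Finally the outer sum is a convex combination of these terms, so by closure under convex combinations the whole expression lies in $\Conv(\C)$.

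The only delicate point, and the one I would spend the most care on, is that $\Delta(s)$ is a subdistribution rather than a distribution, so the coefficients $\Delta(s)(s',a)$ ranging over $s'$ need not sum to $1$ but only to some value $c \le 1$. I would handle this either by interpreting ``convex combination'' in the definition of $\Convex(\C)$ as permitting subconvex combinations (which is the natural setting since the objects $\P_\M(s,w)$ are themselves subprobability measures), or equivalently by rewriting the sum as $c$ times a genuine convex combination plus $(1-c)$ times the zero measure, where the zero measure is obtained, for instance, as $\lim_{n} \bigl(\tfrac{1}{n}\mu + \tfrac{n-1}{n}\cdot 0\bigr)$-style padding; in either reading the argument carries through. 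Beyond this bookkeeping the proof is essentially a direct reading of the recursive definition against the definition of $\Conv(\C)$.
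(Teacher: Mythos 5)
Your proof is correct and follows exactly the route the paper intends: the paper dispatches this lemma in one line as ``a straightforward induction on the word $w$ using the definition of $\P_\M(s,w)$,'' which is precisely your argument. Your extra care about subdistribution coefficients and about $\bbbone$ being the convolution identity addresses genuine (if minor) bookkeeping points that the paper silently glosses over.
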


Lemma~\ref{lem:convC} is established by a straightforward induction on the word $w$ using the definition of $\P_\M(s,w)$.

In the rest of the paper we will consider only distributions that are suitable for algorithmic manipulation. Clearly, we must be able to give them as input to a computational device, so we assume they can be described by finitely many rational parameters. Moreover, we require that testing inequalities between them is decidable, since this is essential for determining the 
faster-than relation. The next definition formalises this intuition.

\begin{definition}[Effective timing distributions]
  A class $\C$ of timing distributions is \emph{effective} if, for any 
  $\varepsilon \ge 0$, $b \in \R_{\ge 0} \cup \set{\infty}$, and $\mu_1,\mu_2 \in \Conv(\C)$, 
  it is decidable whether $\mu_1(t) \ge \mu_2(t) - \varepsilon$, for all $t \le b$.
\end{definition}

\begin{proposition}\label{thm:effective_classes}
  The following classes of timing distributions are effective:
  \begin{itemize}
	  \item exponential distributions,
	  \item piecewise polynomial distributions,
	  \item piecewise affine distributions,
	  \item uniform distributions,
	  \item Dirac distributions.
  \end{itemize}
\end{proposition}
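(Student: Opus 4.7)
The plan is to verify effectiveness of each listed class $\C$ by first exhibiting an explicit closed form for every element of $\Conv(\C)$, and then reducing the inequality $\mu_1(t) \ge \mu_2(t) - \varepsilon$ on $[0,b]$ to a decidable question about those representatives. I will treat the three qualitatively different cases separately: purely discrete (Dirac), piecewise polynomial (covering piecewise polynomial, piecewise affine, and uniform), and exponential.

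For Dirac distributions, a straightforward induction shows that every element of $\Conv(\text{Dirac})$ is supported on a finite set of rational points: a convex combination of finitely supported distributions is finitely supported, and convolution simply adds supports. The CDFs are therefore step functions with finitely many rational breakpoints and rational step heights, so the inequality is checked by inspecting the two one-sided values at the finitely many common breakpoints inside $[0,b]$.

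For piecewise polynomial distributions, I would first show that the class of densities which are piecewise polynomial with rational breakpoints and rational coefficients is closed under both operations. Closure under convex combinations is immediate after refining the partition. Closure under convolution follows because on each cell of the refined $(s,t)$-partition the integrand $f(s)g(t-s)$ is a bivariate polynomial with rational coefficients, and integrating out $s$ over a linearly moving interval with rational endpoints gives a polynomial in $t$ with rational coefficients; the new breakpoints are sums $a_i + b_j$ of the original ones, hence still rational. Consequently, $\mu_1 - \mu_2 + \varepsilon \ge 0$ becomes a piecewise polynomial inequality with rational data, and its truth on each subinterval of a common refinement of $[0,b]$ reduces to testing nonnegativity of a univariate rational polynomial on a rational interval, which is decidable by Sturm's theorem (or, more generally, by Tarski's decision procedure for real closed fields). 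Since uniform and piecewise affine distributions are themselves piecewise polynomial, the same argument covers those cases.

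The genuinely delicate case is that of exponential distributions. By induction one shows that every CDF in $\Conv(\text{exp})$ has the exponential-polynomial form
\[
F(t) \;=\; c + \sum_{i=1}^{k} p_i(t)\, e^{-\lambda_i t},
\]
with $c \in \Q$, distinct rates $\lambda_i \in \Q_{>0}$, and $p_i \in \Q[t]$; closure under convex combinations is immediate, and closure under convolution follows from direct partial-fraction computations, with Erlang-type polynomial coefficients appearing exactly when two rates coincide. Deciding $\mu_1(t) \ge \mu_2(t) - \varepsilon$ then amounts to deciding nonnegativity of a function $h(t) = c' + \sum_i q_i(t)\, e^{-\lambda_i t}$ on $[0,b]$. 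For $b < \infty$, I would isolate the finitely many zeros of $h$ using Khovanskii-style bounds on the number of sign changes of exponential polynomials together with interval arithmetic on the rational data, and then verify the sign on each subinterval. For $b = \infty$, I would first read off the sign of the limit (given by $c'$, or, if $c' = 0$, by the polynomial coefficient of the slowest-decaying exponential term), compute a threshold $T$ beyond which a tail bound certifies that sign, and reduce to the bounded case on $[0,T]$. The main obstacle lies precisely here: certifying the sign of an exponential polynomial between isolated zeros, and in particular excluding tangential touchdowns, is the one step where the argument depends on nontrivial effective analysis of exponential polynomials rather than purely algebraic real-closed-field machinery.
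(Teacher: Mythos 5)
The paper itself gives no real proof of this proposition: it declares the result ``mostly folklore'' and points only to decidability results for the existential theory of the reals, so your write-up is considerably more explicit than what the authors provide. Your treatment of the Dirac and piecewise-polynomial cases (hence also piecewise affine and uniform) is correct and complete: closure of rational piecewise-polynomial data under convex combinations and convolutions, followed by Sturm's theorem (or Tarski's procedure) on a common refinement of the breakpoints, is exactly the intended real-closed-field argument, and the Dirac case is elementary bookkeeping of finitely supported measures.

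The gap is where you say it is: the exponential case. Your normal form $c + \sum_i p_i(t)e^{-\lambda_i t}$ for $\Conv(\C)$ is right, but deciding $h(t) \ge 0$ on $[0,b]$ for such an $h$ is \emph{not} a real-closed-field question --- the graph of $t \mapsto e^{-\lambda t}$ is not semialgebraic, so neither the paper's citation of Tarski nor your appeal to Khovanskii-type zero bounds plus interval arithmetic finishes the job. Interval arithmetic semi-decides the existence of a strict violation $h(t) < 0$; the missing half is a terminating certificate of $h \ge 0$ in the presence of tangential zeros, i.e.\ points where $h = h' = 0$. To close it, note that $h, h', h'', \dots$ are exponential polynomials with the same rates, each with at most $D = k + \sum_i \deg p_i$ real zeros, so every zero of $h$ has multiplicity at most $D$; exact vanishing at a nonzero algebraic point $t_0$ is decidable by Lindemann--Weierstrass, since $1, e^{-\lambda_1 t_0}, \dots, e^{-\lambda_k t_0}$ are linearly independent over $\overline{\Q}$, forcing $c = 0$ and $p_i(t_0) = 0$ for all $i$; what remains is ruling out (or detecting) a common zero of $h$ and $h'$ at a transcendental point, which requires genuinely effective analysis of exponential polynomials rather than folklore. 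You should either supply such a common-zero test or state the effectiveness of exponential distributions with that caveat. In fairness, the paper's own one-line justification glosses over exactly this point, so your proposal is not weaker than the published argument --- it is merely honest about where the real difficulty sits.
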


We do not provide in the conference version a full proof of Proposition~\ref{thm:effective_classes},
as it is mostly folklore but rather tedious.
In particular, for exponential and piecewise polynomial distributions one relies on decidability results 
for the existential theory of the reals~\cite{Canny88,Tarski51}, implying that the most demanding operations above can be performed in polynomial space.

An effective class $\C$ of timing distributions induces the set of semi-Markov processes whose residence-time functions
use timing distributions from $\C$.
Furthermore, a given semi-Markov process has only finitely many states,
and hence can only use finitely many timing distributions.
For our decidability results we will therefore focus on finite classes of timing distributions.
This paper gives algorithmic results for generic effective classes of timing distributions.
In our complexity analyses, we will always assume that the operations on the timing distributions have a unit cost.


\section{Hardness Results}
\label{sec:hardness}
We start the technical part of this article by hardness results inherited from Markov processes.
A Markov process is a semi-Markov process without the residence-time function,
and for a Markov process $\M = (S, \aout, \Delta)$,
we define the probability $\P_\M(s,aw) = \sum_{s' \in S} \Delta(s)(s',a) \cdot \P_\M(s')(w)$
and $\P_\M(s,\varepsilon) = 1$ for the empty word.
The faster-than relation for Markov processes is then $s \preceq s'$ if for all $w$ we have $\P_\M(s,w) \geq \P_\M(s',w)$.

We show that the faster-than problem for Markov processes, and hence also for semi-Markov processes, is undecidable in general,
can not be multiplicatively approximated, and relates to an open problem in number theory even in a restricted case.
These limitations shape and motivate our positive results,
which will be the topic of the remaining sections.

We first explain how hardness results for Markov processes directly imply hardness results for semi-Markov processes.
The following lemma formalises the two ways semi-Markov processes subsume Markov processes.

\begin{lemma}\label{lem:Markov_process_reduction}
  Consider a semi-Markov process $\M = (S,\aout,\Delta,\rho)$ and its induced Markov process $\M' = (S,\aout,\Delta)$.
  \begin{itemize}
	  \item If $\rho$ is constant, i.e. for all $s,s'$ we have $\rho(s) = \rho(s')$,
	  then for all $w$, for all $t$, we have $\P_\M(s,w,t) = \P_{\M'}(s,w) \cdot 
	  (\underbrace{\rho(s) * \cdots * \rho(s)}_{|w| \text{ times}})(t)$.
	  \item If for all $s$, $\rho(s)$ is the Dirac distribution for $0$, 
	  then for all $w$, for all $t$, we have $\P_\M(s,w,t) = \P_{\M'}(s,w)$.
  \end{itemize}
  In particular in both cases, the following holds:
  for $s,s'$ two states, we have $s \ft s'$ in $\M$ if, and only if, $s \ft s'$ in $\M'$.
\end{lemma}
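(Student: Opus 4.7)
The plan is to prove both pointwise identities by induction on $|w|$, and then deduce the equivalence of the two faster-than relations directly from them. The base case $w = \varepsilon$ is immediate: both identities reduce to $1 = 1$, using the convention that the empty convolution product is the Dirac distribution at $0$ (which has value $1$ for every $t \in \R_{\ge 0}$).

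For the inductive step I would write $w = aw'$ and unfold
\[
\P_\M(s, aw') = \sum_{s'' \in S} \Delta(s)(s'', a) \cdot \bigl(\rho(s) * \P_\M(s'', w')\bigr),
\]
then apply the inductive hypothesis to each $\P_\M(s'', w')$. Two elementary facts about convolution drive the argument: it is linear in each argument, so any non-negative scalar factor can be pulled out; and the Dirac distribution at $0$ is the neutral element of convolution. In Case~1, where $\rho(s) = \rho(s'') = \rho$ for every state, linearity lets me extract the scalar $\P_{\M'}(s'', w')$ from the convolution and combine $\rho * \rho^{*|w'|} = \rho^{*|w|}$; the factor $\rho^{*|w|}(t)$ is then common to all summands and can be pulled out of the sum, producing exactly $\P_{\M'}(s, aw') \cdot \rho^{*|w|}(t)$. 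In Case~2, convolution with the Dirac at $0$ vanishes, so each term reduces by induction to the scalar $\P_{\M'}(s'', w')$, independently of $t$, and the sum collapses to $\P_{\M'}(s, aw')$.

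The equivalence of the two faster-than relations then follows painlessly. In Case~2 we have $\P_\M(s, w, t) = \P_{\M'}(s, w)$ for every $t$, so the two conditions coincide literally. In Case~1 the inequality $\P_\M(s, w, t) \ge \P_\M(s', w, t)$ carries the common non-negative factor $\rho^{*|w|}(t)$; assuming $\rho$ is a proper distribution of total mass $1$ (as holds for all distributions considered in the paper), this factor tends to $1$ as $t \to \infty$, so the inequality is valid for every $t$ if and only if $\P_{\M'}(s, w) \ge \P_{\M'}(s', w)$ for every $w$. The whole argument is essentially mechanical; the only point demanding any care is the bookkeeping of the convolution identities and of the scaling factor when translating between the timing-distribution-valued $\P_\M(s, w)$ and the scalar $\P_{\M'}(s, w)$, so there is no genuinely difficult step to single out as a main obstacle.
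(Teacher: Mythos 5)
Your proof is correct and follows the straightforward induction on $|w|$ that the paper leaves implicit (the lemma is stated there without proof), using bilinearity of convolution, the fact that the Dirac distribution at $0$ is the neutral element for convolution, and that the factor $\rho^{*|w|}(t)$ is common to both states in the constant-$\rho$ case. The only point worth flagging is the one you already noticed yourself: the ``only if'' direction of the final equivalence in the constant-$\rho$ case needs some $t$ with $\rho^{*|w|}(t)>0$, which holds whenever $\rho$ has positive total mass, so your standing assumption that $\rho$ is a proper (total) distribution is harmless but should be recorded, since the paper formally allows subdistributions.
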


We will use Lemma~\ref{lem:Markov_process_reduction} to draw corollaries about semi-Markov processes from hardness results of Markov processes.

The hardness results of this section will be based on a connection to probabilistic automata.
A probabilistic automaton is given by 
\[
\A = (Q,A,q_0,\Delta : Q \times A \to \D_{=1}(Q),F) \enspace,
\]
where $Q$ is the state space, $A$ is the alphabet, $q_0$ is an initial state, $\Delta$ is the transition function,
and $F$ is a set of final or accepting states.
Any probabilistic automaton $\A$ induces the probability $\P_\A(w)$ that a run over $w \in A^*$ is accepting,
i.e. starts in $q_0$ and ends in $F$.
The key property of probabilistic automata that we will exploit is the undecidability of the universality problem, 
which was proved in~\cite{Paz71}, see also~\cite{GO10}.
The universality problem is as follows: given a probabilistic automaton $\A$, determine whether for all words $w$ in $A^+$ we have 
$\P_\A(w) \ge \frac{1}{2}$.

We describe a construction which given a probabilistic automaton $\A$, constructs the \emph{derived} Markov process $\M(\A)$.
The set of states of $\M(\A)$ is $Q \times \set{\ell,r} \cup \set{\top}$, where $\top$ is a new state.
Let $s = (q_0,\ell)$ and $s' = (q_0,r)$, where $q_0$ is the initial state of $\A$.
The set of output labels is $A$, and the transition function $\Delta'$ is defined as follows:
\begin{align*}
  \Delta'(p,\ell)((q,\ell),a) &= \frac{1}{2|A|} \Delta(p,a)(q)  &\quad \quad
  \Delta'(p,\ell)(\top,a)     &= \frac{1}{2|A|} \mbox{ if } p \in F \\
  \Delta'(p,r)((q,r),a)       &= \frac{1}{2|A|} \Delta(p,a)(q)  &\quad \quad
  \Delta'(p,r)(\top,a)        &= \frac{1}{4|A|} \enspace.
\end{align*}
We can then verify the following equalities:
  \[
  \P_{\M(\A)}(s,wa)	= \frac{1}{(2 |A|)^{|w| + 1}} \left(1 + \P_\A(w) \right) 
  \]
  and
  \[
  \P_{\M(\A)}(s',wa)= \frac{1}{(2 |A|)^{|w| + 1}} \left(1 + \frac{1}{2} \right) \enspace.
  \]

\begin{theorem}\label{thm:fasterthan_undecidability}
  The faster-than problem is undecidable for Markov processes.
\end{theorem}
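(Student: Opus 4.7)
The plan is to reduce the universality problem for probabilistic automata to the faster-than problem for Markov processes, using the derived Markov process $\M(\A)$ already constructed just before the theorem. Since universality is known to be undecidable~\cite{Paz71}, this yields the result.

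First I would verify the two stated equalities for $\P_{\M(\A)}(s,wa)$ and $\P_{\M(\A)}(s',wa)$ by induction on $|w|$, using the recursive definition $\P_\M(s,aw) = \sum_{s''} \Delta(s)(s'',a)\cdot \P_\M(s'',w)$. A convenient generalisation for the induction is to fix $p \in Q$ and prove
\[
\P_{\M(\A)}((p,\ell),wa) = \tfrac{1}{(2|A|)^{|w|+1}}\bigl(1 + \P_\A^{p}(w)\bigr), \qquad
\P_{\M(\A)}((p,r),wa) = \tfrac{1}{(2|A|)^{|w|+1}}\bigl(1 + \tfrac{1}{2}\bigr),
\]
where $\P_\A^{p}$ is the acceptance probability of $\A$ started at $p$. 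The base case $w=\varepsilon$ follows by summing the outgoing transition probabilities out of $(p,\ell)$ and $(p,r)$, where the factor $1$ comes from $\sum_q \Delta(p,a)(q)=1$ and the extra $\P_\A^p(\varepsilon) = \mathbf{1}_{p \in F}$ (resp.\ $1/2$) comes from the transition to $\top$. The inductive step is immediate once one observes that $\top$ has no outgoing transitions, so only the $(q,\ell)$- or $(q,r)$-successors contribute to the sum, and the left/right structure is preserved.

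From the two equalities it follows that for every $w \in A^*$ and every $a \in A$,
\[
\P_{\M(\A)}(s,wa) \ge \P_{\M(\A)}(s',wa) \iff \P_\A(w) \ge \tfrac{1}{2},
\]
while for the empty word both sides equal $1$. Hence $s \ft s'$ in $\M(\A)$ if and only if $\P_\A(w) \ge 1/2$ for every $w \in A^*$. Assuming without loss of generality that the initial state of $\A$ is accepting (so that $\P_\A(\varepsilon)=1$; this preserves undecidability of universality, which only quantifies over $A^+$), this is exactly the universality condition for $\A$. Because $\A \mapsto \M(\A)$ is a polynomial-time computable construction, this gives a computable reduction from universality to the faster-than problem, and undecidability of the latter follows.

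The logical content of the argument is thus a short and direct reduction; the only genuinely technical step is the inductive verification of the two probability identities, which is routine but must be carried out carefully, paying attention to the dead state $\top$ and to the normalisation factors $\tfrac{1}{2|A|}$ and $\tfrac{1}{4|A|}$ chosen precisely so that the difference $\P_{\M(\A)}(s,wa) - \P_{\M(\A)}(s',wa)$ has the same sign as $\P_\A(w) - \tfrac{1}{2}$.
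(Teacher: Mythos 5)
Your proposal is correct and takes essentially the same route as the paper: a reduction from the universality problem for probabilistic automata via the derived Markov process $\M(\A)$, using the two probability identities stated before the theorem (which the paper asserts without the inductive verification you supply). Your extra care about the $\varepsilon$-word mismatch between $A^+$-universality and the faster-than quantification over all of $\aout^*$ is a detail the paper glosses over, but it does not change the argument.
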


\begin{proof}
  Given a probabilistic automaton $\A$, we construct
  the derived Markov process $\M(\A)$.
  Thanks to the equalities above, $\A$ is universal if, and only if, $s \ft s'$.
  \qed

\end{proof}


We discuss three approaches to recover decidability.

A first approach is to look for \emph{structural restrictions} on the underlying graph.
However, the undecidability result above for probabilistic automata is quite robust in this aspect,
as it already applies when the underlying graph is acyclic, meaning that the only loops are self-loops.
In spite of this, we present in Sect.~\ref{sec:unambiguous} an algorithm to solve the faster-than problem
for \emph{unambiguous} semi-Markov processes.

A second approach is to restrict the \emph{observations}.
The undecidability result above holds already when there are two different output letters,
hence a natural question is to look at what happens when we only have one output letter.
Interestingly, specialising the construction above yields a reduction from the Positivity problem.
This problem appears in various contexts, prominently in number theory,
and its decidability status has been an open problem for at least 30 years~\cite{OW14}.
Formally, the Positivity problem reads: given a linear recurrence sequence, are all terms of the sequence non-negative?
It has been shown that the universality problem for probabilistic automata with one letter alphabet is equivalent 
to the Positivity problem~\cite{AAOW15}.
Thus, using again the derived Markov process $\M(\A)$ for a probabilistic automaton $\A$ with only one label, we obtain the following result.

\begin{theorem}
  The faster-than problem is Positivity-hard over Markov processes with one output label.
\end{theorem}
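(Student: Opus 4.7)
The plan is to obtain this result as a direct specialisation of the reduction used for Theorem~\ref{thm:fasterthan_undecidability}, combined with the known equivalence between the universality problem for one-letter probabilistic automata and the Positivity problem established in~\cite{AAOW15}. So rather than reducing from universality of arbitrary probabilistic automata, I will reduce from universality of one-letter probabilistic automata, and the derived Markov process construction will automatically yield a Markov process with a single output label.

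Concretely, given an instance of the Positivity problem, I would first invoke the reduction of~\cite{AAOW15} to produce a probabilistic automaton $\A$ over a singleton alphabet $A = \{a\}$ such that the linear recurrence sequence is non-negative if and only if $\P_\A(w) \ge \tfrac{1}{2}$ for every non-empty word $w$. Then I would apply the derived Markov process construction $\M(\A)$ from the excerpt. The crucial observation is that the set of output labels of $\M(\A)$ equals the alphabet $A$ of $\A$, so when $|A| = 1$ the resulting Markov process has exactly one output label. The two identities
\[
\P_{\M(\A)}(s, wa) = \frac{1}{(2|A|)^{|w|+1}} \left(1 + \P_\A(w)\right), \qquad \P_{\M(\A)}(s', wa) = \frac{1}{(2|A|)^{|w|+1}} \left(1 + \frac{1}{2}\right)
\]
are derived in the excerpt without any assumption on $|A|$, so they continue to hold. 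Together they give that $s \ft s'$ in $\M(\A)$ if and only if $\P_\A(w) \ge \tfrac{1}{2}$ for all non-empty $w$, which by the choice of $\A$ is equivalent to the original Positivity instance being positive.

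The only genuine obstacle is verifying that the chain of reductions is a \emph{polynomial-time} many-one reduction, as required for Positivity-hardness. The construction $\M(\A)$ is manifestly polynomial (it only duplicates $Q$, adds a sink $\top$, and rescales transition probabilities by factors of $\tfrac{1}{2|A|}$ and $\tfrac{1}{4|A|}$, which are rational of polynomial bit-size), so the question reduces to the efficiency of the reduction from Positivity to one-letter universality in~\cite{AAOW15}; this is already known to be polynomial. Everything else is routine bookkeeping, and since the final state $\top$ and the two initial states $s,s'$ are defined exactly as in the proof of Theorem~\ref{thm:fasterthan_undecidability}, no new verification is needed for the equivalence itself.
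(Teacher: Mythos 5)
Your proposal matches the paper's own argument: the paper likewise invokes the equivalence of one-letter universality with Positivity from~\cite{AAOW15} and then reuses the derived Markov process $\M(\A)$ for a one-letter automaton $\A$, exactly as you do. The extra remarks on polynomial-time bookkeeping are sound but do not change the route.
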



A third approach is \emph{approximations}.
However, we can exploit further the connection we made with probabilistic automata, 
obtaining an impossibility result for \emph{multiplicative approximation}.
We rely on the following celebrated theorem for probabilistic automata due to Condon and Lipton \cite{CL89}.
The following formulation of their theorem is described in detail in \cite{Fijalkow17}.

\begin{theorem}[\cite{CL89}]
\label{thm:inapproximability_automata}
Let $0 < \alpha < \beta < 1$ be two constants.
There is no algorithm which, given a probabilistic automaton $\A$,
  \begin{itemize}
    \item if for all $w$ we have $\P_\A(w) \geq \beta$, returns YES,
    \item if there exists $w$ such that $\P_\A(w) \leq \alpha$, returns NO.
  \end{itemize}
\end{theorem}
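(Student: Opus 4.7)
The plan is to reduce from a basic undecidable problem on probabilistic automata and then amplify the cut-point gap to the target constants $\alpha < \beta$. The starting point is Paz's undecidability of the universality problem: given a PFA $\A$, deciding whether $\P_\A(w) \geq \frac{1}{2}$ for every word $w$ is undecidable. Contrapositively, no algorithm can distinguish YES from NO in the dichotomy
\[
\text{YES: } \forall w \; \P_\A(w) \geq \tfrac{1}{2}, \qquad \text{NO: } \exists w \; \P_\A(w) < \tfrac{1}{2} \enspace.
\]
This already has exactly the shape of the statement in the theorem, but only at the specific cut-point $\frac{1}{2}$ and with no uniform gap between the two sides.

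The core step is an amplification construction that turns a seed PFA $\A$ into a new PFA $\A'$ whose acceptance probabilities are driven towards $1$ whenever $\P_\A(w) \geq \frac{1}{2}$ and towards $0$ whenever $\P_\A(w) < \frac{1}{2}$. A first attempt is to run $k$ independent copies of $\A$ in parallel on the same input and accept by majority vote. For any individual word satisfying $\P_\A(w) \geq \frac{1}{2} + \gamma$, Hoeffding's inequality gives $\P_{\A'}(w) \geq 1 - e^{-2\gamma^2 k}$, and a dual bound $\P_{\A'}(w) \leq e^{-2\gamma^2 k}$ holds on the other side of the cut-point. Choosing $k$ large relative to $\gamma$, $1-\beta$, and $\alpha$ then produces the desired dichotomy $\P_{\A'}(w) \geq \beta$ versus $\P_{\A'}(w) \leq \alpha$, and the reduction is polynomial in the size of $\A$ once $k$ is fixed.

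The main obstacle is precisely that the gap $\gamma$ is not uniform across words: some $w$ can have $\P_\A(w)$ arbitrarily close to $\frac{1}{2}$, so a single number of parallel copies does not work for all of them at once. The Condon--Lipton argument circumvents this by replacing straightforward parallel repetition with a \emph{loopy} self-amplifying automaton that, after each simulated run of $\A$, randomly either commits to the outcome or replays the computation on the same word. The overall acceptance probability then satisfies a fixed-point equation whose iterated dynamics contract the region above $\frac{1}{2}$ towards $1$ and the region below towards $0$, without requiring an a-priori uniform separation. The delicate part of the argument is verifying that this fixed-point construction is realisable as a PFA of polynomial size and that, for every input word, its acceptance probability lands in $[\beta,1]$ in the YES case and in $[0,\alpha]$ in the NO case, thereby lifting the undecidability of strict universality to the promised $(\alpha,\beta)$-separation.
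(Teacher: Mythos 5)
The paper does not prove this theorem; it imports it from Condon and Lipton \cite{CL89} (with the formulation taken from \cite{Fijalkow17}), so the only question is whether your sketch would constitute a valid proof. It would not: the central amplification step is unrealisable. You rightly observe that parallel repetition with majority vote fails because there is no uniform gap $\gamma$ around the cut-point $\frac{1}{2}$, but the ``loopy self-amplifying automaton'' you then invoke is not a construction that exists, and it is not what Condon and Lipton do. A probabilistic automaton reads its input once, so it cannot ``replay the computation on the same word''; and, more fundamentally, any fixed finite gadget wrapped around $\A$ (products, commit-or-restart loops, fixed-point iterations) makes the new acceptance probability a continuous --- in fact rational --- function of the old one. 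No continuous function maps $[\frac{1}{2},1]$ into $[\beta,1]$ and $[0,\frac{1}{2})$ into $[0,\alpha]$ with $\alpha<\beta$, and the hard instances from Paz's theorem contain words with $\P_\A(w)$ arbitrarily close to $\frac{1}{2}$. Hence no reduction that starts from exact universality and post-processes the automaton can yield the gap version; the theorem genuinely requires a different hardness source.

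The actual argument of \cite{CL89} is a direct reduction from the halting problem: given a two-counter (or Turing) machine $M$, one builds a probabilistic automaton whose inputs are candidate computation histories repeated $k$ times, and which performs Freivalds-style randomised consistency checks on each repetition. Every repetition independently detects a cheating history with probability bounded away from $0$, so illegitimate words are accepted with probability tending to $0$ as $k$ grows, while the true halting computation, repeated often enough, is accepted with probability tending to $1$. The amplification thus lives in the input word rather than in the automaton, which is precisely how the uniform $(\alpha,\beta)$ separation is obtained for every pair $0<\alpha<\beta<1$.
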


\begin{theorem}\label{thm:impossibility_approximation}
  Let $0 < \varepsilon < \frac{1}{3}$ be a constant.
  There is no algorithm which, given a Markov process $\M$ and two states $s,s'$,
  \begin{itemize}
    \item if for all $w$ we have $\P_\M(s,w) \geq \P_\M(s',w)$, returns YES,
    \item if there exists $w$ such that $\P_\M(s,w) \leq \P_\M(s',w) \cdot (1 - \varepsilon)$, returns NO.
  \end{itemize}
\end{theorem}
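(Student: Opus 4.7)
The plan is to reuse the derived-Markov-process construction $\M(\A)$ from the undecidability proof, and combine it with the Condon--Lipton inapproximability theorem (Theorem~\ref{thm:inapproximability_automata}). The key idea is that the equalities already established,
\[
\P_{\M(\A)}(s,wa) = \frac{1 + \P_\A(w)}{(2|A|)^{|w|+1}}, \qquad \P_{\M(\A)}(s',wa) = \frac{3/2}{(2|A|)^{|w|+1}},
\]
make the ratio $\P_{\M(\A)}(s,wa)/\P_{\M(\A)}(s',wa)$ a simple affine function of $\P_\A(w)$, so any multiplicative gap for the faster-than problem translates directly into an absolute gap on $\P_\A(w)$.

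Concretely, I would argue by contradiction: suppose an algorithm $\mathcal{B}$ as in the statement exists for some $0<\varepsilon<\frac{1}{3}$. Given a probabilistic automaton $\A$, I build $\M(\A)$ as in Section~\ref{sec:hardness} and run $\mathcal{B}$ on the pair $(s,s')$. From the displayed equalities, one reads off the two implications
\[
\P_\A(w) \geq \tfrac{1}{2} \;\Longleftrightarrow\; \P_{\M(\A)}(s,wa) \geq \P_{\M(\A)}(s',wa),
\]
\[
\P_\A(w) \leq \tfrac{1}{2} - \tfrac{3\varepsilon}{2} \;\Longleftrightarrow\; \P_{\M(\A)}(s,wa) \leq (1-\varepsilon)\,\P_{\M(\A)}(s',wa).
\]
The empty word is harmless, since $\P_{\M(\A)}(s,\varepsilon)=\P_{\M(\A)}(s',\varepsilon)=1$, and every non-empty word is of the form $wa$. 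Therefore: if $\P_\A(w)\geq \tfrac{1}{2}$ for all $w$, the promise of $\mathcal{B}$ forces it to output YES; and if there exists $w$ with $\P_\A(w)\leq \tfrac{1}{2}-\tfrac{3\varepsilon}{2}$, then there is a witness $wa$ falling in the NO branch, so $\mathcal{B}$ must output NO.

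Setting $\beta := \tfrac{1}{2}$ and $\alpha := \tfrac{1}{2}-\tfrac{3\varepsilon}{2}$, the assumption $\varepsilon<\tfrac{1}{3}$ guarantees $0<\alpha<\beta<1$. The algorithm $\mathcal{B}\circ \M(\cdot)$ then solves exactly the promise problem for probabilistic automata ruled out by Theorem~\ref{thm:inapproximability_automata}, a contradiction. I do not foresee a real obstacle here: the construction $\M(\A)$ and its probability identities have been set up precisely for this kind of reduction, and the only genuine step is to pick the right thresholds $\alpha,\beta$ so that the multiplicative slack $(1-\varepsilon)$ on the Markov-process side corresponds to a valid additive gap on the automaton side, which is where the bound $\varepsilon<\tfrac{1}{3}$ arises.
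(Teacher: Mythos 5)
Your proposal is correct and follows essentially the same route as the paper's proof: reduce from Theorem~\ref{thm:inapproximability_automata} via the derived Markov process $\M(\A)$, with the thresholds $\beta = \frac{1}{2}$ and $\alpha = \frac{1}{2} - \frac{3\varepsilon}{2}$, which is exactly the choice made in the paper. The extra details you supply (the explicit equivalences and the treatment of the empty word) are consistent with, and slightly more explicit than, the published argument.
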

\begin{proof}
  Assume towards a contradiction that there exists an algorithm as described in the theorem.
  We then construct an algorithm satisfying the specifications of Theorem~\ref{thm:inapproximability_automata}.

  Let $\alpha = \frac{1}{2} - \frac{3\varepsilon}{2}$ and $\beta = \frac{1}{2}$,
  and let $\A$ be a probabilistic automaton.
  We now run the algorithm on the derived Markov process $\M(\A)$.
  \begin{itemize}
      \item If for all $w$ we have $\P_{\M(\A)}(s,w) \geq \P_{\M(\A)}(s',w)$, then the algorithm returns YES. 
      Indeed, this is equivalent to $\P_\A(w) \ge \beta$.
      \item If there exists $w$ such that $\P_{\M(\A)}(s,w) \leq \P_{\M(\A)}(s',w) \cdot (1 - \varepsilon)$, then the algorithm returns NO.
      Indeed, this is equivalent to $\P_\A(w) \le \alpha$.
  \end{itemize}
  Hence we constructed an algorithm satisfying the specifications of Theorem~\ref{thm:inapproximability_automata},
  a contradiction.
  \qed
\end{proof}


From these hardness results for Markov processes together with Lemma \ref{lem:Markov_process_reduction},
we get the following hardness results for semi-Markov processes.

\begin{corollary}
  The following holds for semi-Markov processes for any class of timing distributions.
  \begin{itemize}
    \item The faster-than problem is undecidable.
    \item The faster-than problem with only one output label is Positivity-hard.
    \item The faster-than problem can not be multiplicatively approximated.
  \end{itemize}
\end{corollary}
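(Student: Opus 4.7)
The plan is to lift each of the three hardness results from Markov processes to semi-Markov processes by a direct application of Lemma~\ref{lem:Markov_process_reduction}. Fix any (non-trivial) class $\C$ of timing distributions together with a single distribution $\mu \in \Conv(\C)$ of strictly positive total mass. Given a Markov process $\M' = (S, \aout, \Delta)$ coming from any of the three earlier hardness statements, I would construct the semi-Markov process $\M = (S, \aout, \Delta, \rho)$ with the constant residence-time $\rho(s) = \mu$ for every state $s$. This construction is polynomial-time computable and, crucially, it preserves the size of the output alphabet.

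The first clause of Lemma~\ref{lem:Markov_process_reduction} then gives the factorisation $\P_\M(s,w,t) = \P_{\M'}(s,w) \cdot F_w(t)$, where $F_w$ denotes the $|w|$-fold convolution of $\mu$ with itself. Since $F_w(t)$ does not depend on the starting state, the lemma yields $s \ft s'$ in $\M$ if and only if $s \ft s'$ in $\M'$. Undecidability is then immediate from Theorem~\ref{thm:fasterthan_undecidability}, and Positivity-hardness follows from its one-label counterpart combined with the fact that the reduction preserves the alphabet.

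The inapproximability claim requires slightly more care because it is phrased in terms of a multiplicative gap between probabilities rather than a binary answer. The plan is to contradict Theorem~\ref{thm:impossibility_approximation} by running any hypothetical approximator for semi-Markov processes on the $\M$ derived from a given $\M'$. On one hand, $\P_\M(s,w,t) \ge \P_\M(s',w,t)$ for all $w, t$ is equivalent to $\P_{\M'}(s,w) \ge \P_{\M'}(s',w)$ for all $w$, since $F_w(t) \ge 0$ everywhere and is strictly positive for some $t$. On the other hand, any Markov-level witness $w$ with $\P_{\M'}(s,w) \le \P_{\M'}(s',w)(1 - \varepsilon)$ yields a semi-Markov-level witness $(w, t)$ at any $t$ with $F_w(t) > 0$. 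The one point requiring genuine attention is the non-degeneracy of the shared factor $F_w$, which is exactly what the assumption that $\mu$ has positive mass ensures; everything else is a straightforward bookkeeping exercise on top of Lemma~\ref{lem:Markov_process_reduction}.
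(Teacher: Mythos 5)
Your proposal is correct and takes essentially the same route as the paper, which obtains the corollary by equipping the hard Markov-process instances with a constant residence-time function and invoking the first clause of Lemma~\ref{lem:Markov_process_reduction}. Your explicit attention to the positive total mass of $\mu$ and to how the multiplicative gap transfers through the common factor $F_w(t)$ fills in details the paper leaves implicit, but does not change the argument.
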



\section{Time-Bounded Additive Approximation}
\label{sec:approximation}
Instead of considering multiplicative approximation,
we can also consider additive approximation,
meaning that we want to decide whether for all $w$ and $t$ we have $\P_\M(s,w,t) \geq \P_\M(s',w,t) - \varepsilon$
for some constant $\varepsilon > 0$.
In this section, we present an algorithm to solve the problem of approximating additively the faster-than relation with two assumptions:
\begin{itemize}
	\item \emph{time-bounded}: we only look at the behaviours up to a given bound $b$ in $\bbbr_{\ge 0}$,
	\item \emph{slow residence-time functions}: each transition takes \emph{some} time to fire.
\end{itemize}
As we will show, the combination of these two assumptions imply that the relevant words have bounded length.
This is in contrast to the impossibility of approximating the faster-than relation multiplicatively
that we showed in Sect. \ref{sec:hardness}.
More precisely, we consider the \emph{time-bounded} variant of the faster-than problem:
given a time bound $b$ in $\R_{\ge 0}$, a semi-Markov process, and two states $s$ and $s'$,
determine whether for all $t \leq b$ and $w$ it holds that $\P_\M(s,w,t) \geq \P_\M(s',w,t)$.

We first observe that this restriction of the faster-than problem
does not make any of the problems in Sect. \ref{sec:hardness} easier for semi-Markov processes. 
Indeed, if the residence-time functions are all Dirac distributions on $0$, then all transitions are fired instantaneously, and
the time-bounded restriction is immaterial.
Thus we focus on distributions that do not fire instantaneously,
as made precise by the following definition.


\begin{definition}[Slow distributions]
  We say that a class $\C$ of timing distributions is \emph{slow} if 
  for all finite subset $\C_0$ of $\C$, 
  there exists a computable function $\varepsilon \colon \N \times \R_{\ge 0} \to [0,1]$ such that 
  for all $n$, $t$, and $\mu_1, \dots,\mu_n \in \Convex(\C_0)$ we have 
  $(\mu_1 * \dots *\mu_n)(t) \leq \varepsilon(n,t)$
  and $\lim_{n \to \infty} \varepsilon(n,t) = 0$.
\end{definition}

Given a slow and effective class $\C$ of timing distributions,
we can do additive approximation of the time-bounded faster-than problem in the following way.
We introduce the following notation.
Fix a semi-Markov process $\M$.
Let $\C_\M = \Convex(\set{\rho(s) \mid s \in S})$, and $n$ in $\N$.
We define the timing distribution $F_{\M,n}$ by $F_{\M,n}(t) = 1$ if $n = 0$ and otherwise
\[
  F_{\M,n}(t) = \sup \set{(\mu_1 * \cdots * \mu_n)(t) \mid \mu_1,\ldots,\mu_n \in \C_\M} \enspace.
\]

\begin{lemma}\label{lem:upperbound_f}
  For all $s$ and all $w$, we have $\P_\M(s,w) \le F_{\M,|w|}$.
\end{lemma}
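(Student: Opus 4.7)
The plan is to prove the inequality by induction on the length of $w$, after strengthening the statement to expose the algebraic form of $\P_\M(s,w)$ as a measure.

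Specifically, I would first establish the auxiliary representation claim: for every state $s$ and word $w$, the measure $\P_\M(s,w)$ can be written as a sub-convex combination
\[
\P_\M(s,w) \;=\; \sum_{i} \lambda_i \, \bigl(\mu_{i,1} * \cdots * \mu_{i,|w|}\bigr),
\]
with $\mu_{i,j} \in \C_\M$ and $\sum_i \lambda_i \le 1$. The base case $w = \varepsilon$ uses the convention that the empty convolution is the Dirac at $0$, whose CDF is $\bbbone$, matching $F_{\M,0}$. For the inductive step $w = aw'$ with $|w'|=n$, I would expand the recursive definition of $\P_\M(s,aw')$, substitute the induction hypothesis on each $\P_\M(s',w')$, and push $\rho(s)$ inside the sums using bilinearity of convolution. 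The resulting coefficients are products $\Delta(s)(s',a)\,\lambda_i^{s'}$, which sum to at most $\sum_{s'} \Delta(s)(s',a) \le 1$ because $\Delta(s)$ is a subdistribution on $S \times \aout$; each resulting convolution has exactly $n+1$ factors, all in $\C_\M$ (using $\rho(s) \in \C_\M$).

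Once the representation is in hand, the target bound follows immediately. Evaluating at any $t$ gives
\[
\P_\M(s,w)(t) \;=\; \sum_i \lambda_i \,\bigl(\mu_{i,1} * \cdots * \mu_{i,|w|}\bigr)(t) \;\le\; \Bigl(\sum_i \lambda_i\Bigr) F_{\M,|w|}(t) \;\le\; F_{\M,|w|}(t),
\]
since each term $(\mu_{i,1} * \cdots * \mu_{i,|w|})(t)$ is at most $F_{\M,|w|}(t)$ by definition of the supremum, and $F_{\M,|w|}(t) \ge 0$.

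The main obstacle is really only bookkeeping. One must be careful that $\Delta(s)$ is a subdistribution rather than a total distribution, so the weights cannot be normalised freely; fortunately the argument only requires $\sum_i \lambda_i \le 1$, which the subdistribution property supplies. A second, more conceptual point is to work at the level of measures and exploit bilinearity of convolution, rather than chasing pointwise bounds through the integral $\int_0^t F_{\M,n}(t-x)\,\rho(s)(\mathrm{d}x)$: this sidesteps any measurability worry about the supremum $F_{\M,|w|}$ and keeps the proof short.
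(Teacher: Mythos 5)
Your proof is correct, and it shares the paper's high-level plan --- induction on $|w|$ driven by the recursive definition of $\P_\M(s,aw)$ --- but it runs the induction in a genuinely different way. The paper's induction hypothesis is the inequality itself: it bounds $\P_\M(s',w) \le F_{\M,|w|}$, convolves with $\sum_{s'} \Delta(s)(s',a)\,\rho(s) \in \C_\M$, and then asserts $\mu * F_{\M,|w|} \le F_{\M,|w|+1}$. Your induction hypothesis is instead an exact representation of $\P_\M(s,w)$ as a sub-convex combination of $|w|$-fold convolutions of elements of $\C_\M$ (in effect a quantitative refinement of Lemma~\ref{lem:convC}), and the supremum defining $F_{\M,|w|}$ enters only once, pointwise, at the very end. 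What this buys is exactly what you note in your last paragraph: you never convolve a measure against the supremum function $F_{\M,n}$, so you need neither its measurability nor the step $\mu * F_{\M,n} \le F_{\M,n+1}$; that step amounts to pulling a supremum out of an integral, and since in general one only has $\int \sup \ge \sup \int$, it is the one place in the paper's shorter argument that deserves explicit justification. Your bookkeeping is also sound: the weights $\Delta(s)(s',a)\,\lambda_i^{s'}$ sum to at most $1$ because $\Delta(s)$ is a subdistribution, and each new factor $\rho(s)$ lies in $\C_\M$. Both arguments deliver the pointwise bound on $F_{\M,n}(b)$ needed in Theorem~\ref{thm:approx_faster}, so your version is a safe, if slightly longer, substitute.
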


\begin{proof}
  We proceed by induction on the length of $w$.
  It is clear for $|w| = 0$.
  \begin{align*}
    \P_\M(s,aw) 
    &= \sum_{s' \in S} \Delta(s)(s',a) \cdot \rho(s) * \P_\M(s',w)  \\
    &\le \underbrace{\sum_{s' \in S} \Delta(s)(s',a) \cdot \rho(s)}_{\in \C_\M} * F_{\M,|w|} \\
    &\le F_{\M,|w|+1} \enspace.
  \end{align*}
  This concludes.
  \qed
\end{proof}

\begin{theorem}
  \label{thm:approx_faster}
  There exists an additive approximation algorithm for the time-bounded faster-than problem over semi-Markov processes
  for all slow and effective classes of timing distributions.

  In other words, for a constant $\varepsilon > 0$, there exists an algorithm which, given a semi-Markov process $\M$,
  two states $s,s'$, and a bound $b$ in $\R_{\geq 0}$,
  determines whether
  \[
  \forall w, \forall t \le b,\ \P_\M(s,w,t) \ge \P_\M(s',w,t) - \varepsilon \enspace.
  \]
\end{theorem}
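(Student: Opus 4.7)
The plan is to collapse the universal quantification over $w$ to a finite range using slowness of the residence-time distributions, and then resolve the remaining finitely many comparisons by effectivity of $\C$.

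Concretely, let $\C_0 = \set{\rho(s) \mid s \in S}$, and let $\eta \colon \N \times \R_{\ge 0} \to [0,1]$ be a bounding function witnessing that $\C$ is slow. By definition $F_{\M,n}(b) \le \eta(n,b)$, and since $\eta$ is computable with $\lim_{n \to \infty} \eta(n,b) = 0$, I enumerate $n = 0,1,2,\ldots$ until finding the least $N$ with $\eta(N,b) \le \varepsilon$. Two monotonicity observations make this $N$ sufficient for the whole tail $|w| \ge N$: first, $F_{\M,n}(t)$ is non-increasing in $n$, because convolving with any distribution supported on $\R_{\ge 0}$ can only decrease a CDF pointwise; second, $F_{\M,n}(t)$ is non-decreasing in $t$, as any CDF is. Combining these with Lemma~\ref{lem:upperbound_f} gives, for every $|w| \ge N$ and every $t \le b$,
\[
\P_\M(s',w,t) \;\le\; F_{\M,|w|}(t) \;\le\; F_{\M,N}(b) \;\le\; \eta(N,b) \;\le\; \varepsilon,
\]
so the desired inequality $\P_\M(s,w,t) \ge \P_\M(s',w,t) - \varepsilon$ holds trivially: the right-hand side is non-positive while the left-hand side is a probability.

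It then suffices to test, for each of the finitely many words $w$ of length at most $N$, whether $\P_\M(s,w,t) \ge \P_\M(s',w,t) - \varepsilon$ for all $t \le b$. By Lemma~\ref{lem:convC} both $\P_\M(s,w)$ and $\P_\M(s',w)$ lie in $\Conv(\C)$, so effectivity of $\C$ decides each such test. The algorithm returns YES precisely when all of these finitely many tests succeed.

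The only real subtlety is aligning the slowness hypothesis, which is phrased in terms of convolutions of convex combinations of $\C_0$, with the recursive shape of $\P_\M(s,w)$; this is exactly what Lemma~\ref{lem:upperbound_f} packages, so the truncation step reduces to the two monotonicity observations above and the rest is routine. The step I expect to be the main technical obstacle is justifying cleanly that $F_{\M,n}$ is pointwise non-increasing in $n$; after that, complexity-wise, the remaining question is how fast $\eta(n,b)$ tends to $0$ for specific classes such as exponential or uniform distributions, since this controls both $N$ and the size of the representations of $\P_\M(s,w)$ manipulated in the finite phase.
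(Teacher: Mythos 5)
Your proof is correct and takes essentially the same approach as the paper's: use slowness together with Lemma~\ref{lem:upperbound_f} to show that all words of length at least a computable $N$ satisfy the inequality trivially, then decide the finitely many remaining comparisons via effectiveness of $\C$. The only (harmless) difference is that you obtain the tail bound for all $|w| \ge N$ from the monotonicity of $F_{\M,n}$ in $n$, whereas the paper reads it off directly from $\lim_{n \to \infty} \varepsilon(n,b) = 0$; if anything your variant is slightly more careful, since the slowness definition does not promise that the bounding function is itself monotone in $n$.
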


\begin{proof}
Let $\C_\M = \Convex(\set{\rho(s) \mid s \in S})$, since $S$ is finite there exists 
a computable function $\varepsilon \colon \N \times \R_{\ge 0} \to [0,1]$ such that 
for all $n$, $t$, and $\mu_1, \dots,\mu_n \in \C_\M$ we have 
$(\mu_1 * \dots *\mu_n)(t) \leq \varepsilon(n,t)$
and $\lim_{n \to \infty} \varepsilon(n,t) = 0$.
Given $\varepsilon > 0$, there exists $N$ such that $\varepsilon(N,b) < \varepsilon$.
Let $n \ge N$. By assumption
\[
(\mu_1 * \dots * \mu_n)(b) \leq \varepsilon(n,b) \le \varepsilon(N,b) < \varepsilon
\] 
for all $\mu_1, \dots, \mu_n \in \C_\M$.
Taking the supremum over $\mu_1, \dots, \mu_n$, we then get $F_{\M, n}(b) < \varepsilon$,
and by Lemma \ref{lem:upperbound_f}, this means that for all $w$ of length at least $N$, we have $\P_\M(s',w,b) < \varepsilon$.
Hence it holds trivially that for all $t \leq b$ and $w$ of length at least $N$, we have $\P_\M(s,w,t) \geq \P_\M(s',w,t) - \varepsilon$.

Thus the algorithm checks whether for all words of length less than $N$, for all $t \le b$, we have 
$\P_\M(s,w,t) \geq \P_\M(s',w,t) - \varepsilon$, which is decidable thanks to the effectiveness of $\C$.
  \qed
\end{proof}

Next we show that there are interesting classes of timing distributions that are indeed slow.
For this we introduce a class of timing distributions that are not just slow,
but furthermore are guaranteed to converge to zero rapidly.
We say that a timing distribution $\mu$ is \emph{very slow} if there exists a computable function $\varepsilon : \R_{\ge 0} \to [0,1]$ 
such that $\lim_{t \to 0} \frac{\varepsilon(t)}{t} = 0$
and for all $t$, we have $\mu(t) \le \varepsilon(t)$.
In order to show that very slow timing distributions are slow,
we need the following lemma.


\begin{lemma}\label{lem:bound_sum}
  Let $\mu_1,\ldots,\mu_n$ be timing distributions.
  Then 
  \[
  (\mu_1 * \mu_2 * \cdots * \mu_n)(t) \leq \sum_{i = 1}^n \mu_i \left(\frac{t}{n}\right) \enspace.
  \]
\end{lemma}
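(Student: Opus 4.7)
The plan is to interpret the convolution probabilistically and then apply a simple pigeonhole-plus-union-bound argument. Recall from the preliminaries that if $X_1,\dots,X_n$ are independent random variables with respective timing distributions $\mu_1,\dots,\mu_n$, then $\P(X_1+\cdots+X_n \le t) = (\mu_1 * \cdots * \mu_n)(t)$. So it suffices to bound this probability by $\sum_{i=1}^n \mu_i(t/n)$.

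The key observation is a pigeonhole-style contrapositive: if $X_i > t/n$ for every $i$, then $X_1+\cdots+X_n > t$. Equivalently, the event $\{X_1+\cdots+X_n \le t\}$ is contained in $\bigcup_{i=1}^n \{X_i \le t/n\}$. First I would state this containment of events explicitly, justifying it with the one-line pigeonhole argument. Then a union bound gives
\[
\P(X_1+\cdots+X_n \le t) \;\le\; \P\!\left(\bigcup_{i=1}^n \{X_i \le t/n\}\right) \;\le\; \sum_{i=1}^n \P(X_i \le t/n) \;=\; \sum_{i=1}^n \mu_i(t/n),
\]
which is exactly the claimed inequality.

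There is essentially no obstacle here; the only subtlety worth flagging is that the probabilistic interpretation of convolution requires the $X_i$ to be independent, but this is exactly the standard construction underlying the convolution operator as introduced earlier in the paper, so one may take $X_1,\dots,X_n$ to be the coordinate projections on the product space $(\R_{\ge 0}^n, \mu_1 \otimes \cdots \otimes \mu_n)$. No further machinery is required, and the proof is just a few lines.
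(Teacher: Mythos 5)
Your proof is correct, but it takes a genuinely different route from the paper's. The paper proves the lemma by induction on $n$: it writes $(\mu_1 * \cdots * \mu_{n+1})(t)$ as a convolution integral, splits the domain of integration at $\tfrac{nt}{n+1}$, bounds each piece via the inequality $\int_E f(x)\,\mu(\wrt{x}) \le \mu(E)\cdot \sup_E f$, and then invokes the induction hypothesis after a change of variables. Your argument replaces all of this with a single non-inductive step: the pigeonhole containment $\{X_1+\cdots+X_n \le t\} \subseteq \bigcup_{i}\{X_i \le t/n\}$ followed by a union bound on the product (sub)measure $\mu_1 \otimes \cdots \otimes \mu_n$. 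This is legitimate --- the paper itself establishes the probabilistic reading of convolution for two independent variables, and it extends to $n$ factors by associativity; the only point worth making explicit, which you do flag, is that the $\mu_i$ are subdistributions, so the product space carries total mass at most $1$. That causes no trouble since monotonicity and subadditivity hold for arbitrary measures, and the marginal $\P(X_i \le t/n)$ is then $\mu_i(t/n)\prod_{j\ne i}\mu_j(\R_{\ge 0}) \le \mu_i(t/n)$, so your bound is if anything slightly stronger. What each approach buys: yours is shorter, symmetric in the $\mu_i$, and makes the combinatorial content (at least one summand must be small) transparent; the paper's stays entirely inside the integral formalism of convolution as defined in the preliminaries, at the cost of an induction, an asymmetric split point, and a change of variables. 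Either proof is acceptable.
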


\begin{proof}
  We proceed by induction on $n$.
  The case of $n = 1$ is trivial.
  Recall that for any non-negative function $f$ and measure $\mu$ we have
  \begin{equation}\label{eq:ineq}
    \int_E f(x) \mu(\wrt{x}) \leq \mu(E) \cdot (\sup_E f(x)) \enspace.
  \end{equation}
  Let $\mu = \mu_1 * \dots * \mu_n$.
  \begin{align*}
		     & (\mu_1 * \cdots * \mu_{n+1})(t) \\
             &= \int_0^t \mu(t-x) \mu_{n+1}(\wrt{x}) \\
             &= \int_0^{\frac{nt}{n+1}} \mu(t-x) \mu_{n+1}(\wrt{x}) 
             + \int_{\frac{nt}{n+1}}^t \mu(t-x) \mu_{n+1}(\wrt{x}) \\
             &= \int_0^{\frac{nt}{n+1}} \mu(t-x) \mu_{n+1}(\wrt{x}) 
             + \int_0^{\frac{t}{n+1}} \mu \left(\frac{t}{n+1} - u \right) \mu_{n+1}(\wrt{u}) \\
             &\leq \mu \left(\frac{nt}{n+1}\right) 
             + \mu_{n+1}\left(t-\frac{nt}{n+1}\right) \\
             &\leq \sum_{i=1}^n \mu_i\left(\frac{n}{n+1} \frac{t}{n}\right) 
             + \mu_{n+1} \left(\frac{t}{n+1}\right) 
             = \sum_{i=1}^{n+1} \mu_i\left(\frac{t}{n+1}\right) \enspace.
  \end{align*}
  The third equality is the change of variable $u = x - \frac{nt}{n+1}$.
  The first inequality uses for each summand the inequality~\eqref{eq:ineq}.
  The second inequality is by induction hypothesis.
  \qed
\end{proof}







We can now prove the following theorem.

\begin{theorem}
  The following classes of timing distributions are slow:
  \begin{itemize}
    \item very slow distributions,
    \item uniform distributions, and
    \item exponential distributions.
  \end{itemize}
\end{theorem}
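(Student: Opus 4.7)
The plan is to handle the three cases separately: the very slow case follows directly from Lemma \ref{lem:bound_sum}, while the uniform and exponential cases require a Chernoff-style bound via the Laplace transform, since Lemma \ref{lem:bound_sum} turns out to be too weak for them.

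\textbf{Very slow distributions.} Given a finite $\C_0$ of very slow distributions, each $\mu \in \C_0$ comes with a computable $\varepsilon_\mu$ witnessing $\mu(t) \le \varepsilon_\mu(t)$ and $\varepsilon_\mu(t)/t \to 0$ as $t \to 0$. I would take $\varepsilon_0 := \max_{\mu \in \C_0} \varepsilon_\mu$, which is computable and inherits $\varepsilon_0(t)/t \to 0$. Any convex combination of $\C_0$ is pointwise bounded by $\varepsilon_0$, so Lemma \ref{lem:bound_sum} gives
\[
(\mu_1 * \cdots * \mu_n)(t) \le \sum_{i=1}^n \mu_i(t/n) \le n\,\varepsilon_0(t/n) = t \cdot \frac{\varepsilon_0(t/n)}{t/n},
\]
which tends to $0$ as $n \to \infty$ for fixed $t$ since $t/n \to 0$. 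Setting $\varepsilon(n,t) := \min(1, n\,\varepsilon_0(t/n))$ witnesses slowness.

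\textbf{Uniform and exponential distributions.} Here Lemma \ref{lem:bound_sum} is insufficient: for uniform on $[0,b]$, $\mu(t/n) = t/(nb)$ for large $n$, and the sum of $n$ such terms is the constant $t/b$. Instead I would use a Chernoff bound. For any distribution $\mu$ on $[0,\infty)$ that is not a Dirac at $0$ and any $s > 0$, the Laplace transform $L_\mu(s) := \int_0^\infty e^{-sx}\, \mu(\wrt{x})$ satisfies $L_\mu(s) < 1$, because $e^{-sx} < 1$ almost surely. Markov's inequality then gives $\mu(t) \le e^{st} L_\mu(s)$, and independence of the convolved variables yields $(\mu_1 * \cdots * \mu_n)(t) \le e^{st} \prod_{i=1}^n L_{\mu_i}(s)$. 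A direct calculation shows $L_\nu(s) = (e^{-sa} - e^{-sb})/(s(b-a))$ for uniform on $[a,b]$, and $L_\nu(s) = \lambda/(\lambda+s)$ for exponential with rate $\lambda$, both strictly below $1$ for $s > 0$. Fixing any $s_0 > 0$ and setting $c := \max_{\nu \in \C_0} L_\nu(s_0) < 1$, linearity of the Laplace transform gives $L_\mu(s_0) \le c$ for all $\mu \in \Convex(\C_0)$; hence $(\mu_1 * \cdots * \mu_n)(t) \le e^{s_0 t} c^n$, and $\varepsilon(n,t) := \min(1, e^{s_0 t} c^n)$ is a computable witness since $c^n \to 0$.

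\textbf{Main obstacle.} The very slow case is immediate from Lemma \ref{lem:bound_sum}. The nontrivial part is recognising that the same lemma cannot handle the uniform and exponential cases (they need not be very slow: uniform on $[0,b]$ has a positive density at the origin) and replacing it with the Chernoff/Laplace argument; the one technical point to check is the strict inequality $L_\nu(s) < 1$ holding uniformly over the finite family $\C_0$, which is precisely what produces the geometric decay in $n$.
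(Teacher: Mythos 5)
Your proof is correct, but for two of the three cases it takes a genuinely different route from the paper. The very slow case is essentially identical to the paper's: both arguments push Lemma~\ref{lem:bound_sum} through a uniform bound $\varepsilon_0$ over the finite family and use $n\,\varepsilon_0(t/n) = t\cdot\varepsilon_0(t/n)/(t/n) \to 0$. Your observation that Lemma~\ref{lem:bound_sum} is too weak for uniform and exponential distributions is accurate, and the paper indeed does not use it there; instead, the paper dominates every $\mu_i \in \Convex(\C_0)$ pointwise by a single worst-case member of the class (the exponential of extremal rate, resp.\ the uniform with the smallest $a_i$ and $b_i$), reduces to the $n$-fold self-convolution of that one distribution, and invokes the known closed forms of the Erlang and Irwin--Hall distributions together with their pointwise convergence to $0$. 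Your Chernoff/Laplace-transform argument replaces this entirely: the bound $(\mu_1 * \cdots * \mu_n)(t) \le e^{s_0 t}\prod_i L_{\mu_i}(s_0) \le e^{s_0 t}c^n$ with $c = \max_{\nu\in\C_0} L_\nu(s_0) < 1$ needs no closed form for the convolution, yields an explicit geometric decay rate, and in fact proves slowness for \emph{any} finite class of timing distributions that put positive mass on $(0,\infty)$ --- strictly more general than the paper's case-by-case treatment. The one point you rightly flag, strictness of $L_\nu(s_0) < 1$ uniformly over $\C_0$, holds because $\C_0$ is finite and each member has $\nu((0,\infty)) > 0$; the extension to $\Convex(\C_0)$ by linearity of the Laplace transform (including sub-distributions of total mass below $1$) is also sound. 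Both approaches are valid; yours is arguably cleaner and more uniform.
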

\begin{proof}
  Let $\C$ be a class of very slow timing distributions, and $\C_0 = \set{\mu_1,\ldots,\mu_n}$ a finite subset of~$\C$.
  Since every timing distribution in $\C$ is very slow,
  for every $i \in \set{1,\ldots,n}$ there exists a function $\varepsilon_i$ such that $\mu_i(t) \leq \varepsilon_i(t)$ for all $t$.
  Let $\varepsilon(n,t) = n \cdot \max \set{\varepsilon_i\left(\frac{t}{n} \right) \mid i \in \set{1,\ldots,n}}$.
  Note that $\lim_{n \to \infty} \varepsilon(n,t) = 0$.
  Let $\nu_1,\ldots,\nu_n$ in $\Convex(\C_0)$, we have $(\nu_1 * \cdots * \nu_n)(t) \le \sum_{i = 1}^n \nu_i \left(\frac{t}{n} \right)$
  thanks to Lemma~\ref{lem:bound_sum}.
  This implies that $(\nu_1 * \cdots * \nu_n)(t) \le \varepsilon(n,t)$, 
  which concludes.

  For exponential distributions, we proceed as follows.
  Let $\C_0$ be a finite class of exponential distributions.
  Let $\lambda > 0$ be the rate of the slowest exponential distributions appearing in $\C_0$,
  and let $\mu(t) = 1 - e^{-\lambda t}$.
  Then for any $\mu_1, \dots, \mu_n$ in $\Convex(\C_0)$ we have
  \[(\mu_1 * \dots * \mu_n)(t) \leq (\underbrace{\mu * \dots * \mu}_{n \text{ times}})(t).\]
  The distribution $\mu * \dots * \mu$ is called the Gamma (or more precisely, Erlang) distribution,
  and there is a computable closed form for it.
  In particular, if we let 
  \[\varepsilon(n,t) = (\underbrace{\mu * \dots * \mu}_{n \text{ times}})(t),\]
  we have $\lim_{n \to \infty} \varepsilon(n,b) = 0$,
  so exponential distributions are slow.

  Uniform distributions can be handled using a similar way as for exponential distributions.
  Let $\C_0$ be a finite class of uniform distributions with parameters $a_i$ and $b_i$ for $i \in \set{1,\ldots,n}$.
  Let $a$ be the smallest $a_i$ and $b$ the smallest $b_i$,
  and let $\mu$ be the uniform distribution with parameters $a$ and $b$.
  Then it follows that
  \[(\mu_1 * \dots * \mu_n)(t) \leq (\underbrace{\mu * \dots * \mu}_{n \text{ times}})(t) = \varepsilon(n,t).\]
  Then $(\mu * \dots * \mu)$ also has a nice closed form \cite{KvC10}
  and $\lim_{n \to \infty} \varepsilon(n,b) = 0$.
  \qed
\end{proof}




\section{Unambiguous Semi-Markov Processes}
\label{sec:unambiguous}
In order to regain decidability of the faster-than relation,
we can look at structurally simpler special cases of semi-Markov processes.
Here we will focus on semi-Markov processes such that each output word induces
at most one trace of states.
More precisely, we will say that a semi-Markov process is \emph{unambiguous} if 
for every $s$ in $S$ and $a$ in $\aout$, there exists at most one $s'$ in $S$ such that $\Delta(s)(s',a) \neq 0$.
A related notion of bounded ambiguity has been utilised to obtain decidability results in the context of probabilistic automata \cite{FRW17}.
We introduce the following notation for unambiguous semi-Markov processes:
$T(s,w)$ is the state reached after emitting $w$ from $s$.

\begin{example}
  Figure \ref{fig:unambiguous} gives an example of an unambiguous semi-Markov process.
  For each of the three states, there is at most one state that can be reached by a given output label.
  However, there need not be a transition for each output label from every state.
  In this example, the state $s_2$ has no $b$-transition, so for instance $T(s_1,ab) = s_2$, but $T(s_1,abb)$ is undefined.
\begin{figure}
  \centering
%
\includegraphics[scale=.3]{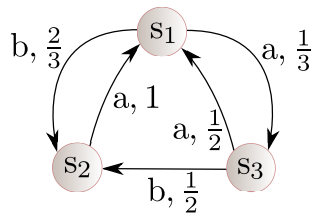}
  \caption{An example of an unambiguous semi-Markov process.}
  \label{fig:unambiguous}
\end{figure}
\end{example}

\begin{theorem}\label{thm:unambiguous}
  The faster-than problem is decidable in $\coNP$ over unambiguous semi-Markov processes for all effective classes of timing distributions.
\end{theorem}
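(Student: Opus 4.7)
The plan is to give a nondeterministic polynomial-time algorithm for the \emph{complement} of the faster-than problem: nondeterministically guess a word $w$ of polynomial length and verify in polynomial time that $w$ witnesses $s \not\ft s'$.

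\emph{Decomposition from unambiguity.} Because $\M$ is unambiguous, for every state $s$ and word $w$ there is at most one $w$-labelled path $s = s_0, s_1, \dots, s_{|w|}$. A straightforward induction on $|w|$ using the definition of $\P_\M$ yields
\[
\P_\M(s,w) \;=\; p_s(w)\cdot\mu_s(w),
\]
where $p_s(w) = \prod_{i=0}^{|w|-1}\Delta(s_i)(s_{i+1}, w_{i+1})$ (interpreted as $0$ if no such path exists) and $\mu_s(w) = \rho(s_0)\ast\rho(s_1)\ast\cdots\ast\rho(s_{|w|-1}) \in \Conv(\C)$.

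\emph{Verification step.} For a guessed $w$, the scalars $p_s(w), p_{s'}(w)$ and the convolutions $\mu_s(w), \mu_{s'}(w)$ can be assembled in time polynomial in $|w|$, under the unit-cost assumption on operations over timing distributions. The effectiveness of $\C$, instantiated with $\varepsilon = 0$ and $b = \infty$, then decides in polynomial time whether $p_s(w)\,\mu_s(w)(t) \ge p_{s'}(w)\,\mu_{s'}(w)(t)$ holds for all $t \ge 0$; a negative answer certifies that $w$ is a genuine witness.

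\emph{Bounding the witness length.} The key step is to show that, whenever $s \not\ft s'$, a witness of polynomial length exists. I would introduce the deterministic \emph{product DFA} $\M^{(2)}$ on state space $(S \cup \{\bot\})^2$, with $(q,q') \xrightarrow{a} (T(q,a), T(q',a))$ and $\bot$ absorbing, and split on the pair trajectory of prefixes of $w$. If the trajectory ever reaches a pair $(\bot, q')$ from which $s'$ can still take a transition, then by reachability a prefix of length at most $(|S|+1)^2$ together with one further letter yields a word $w$ with $p_s(w) = 0 < p_{s'}(w)$; any $t$ in the support of $\mu_{s'}(w)$ witnesses the violation. Otherwise both paths exist throughout, and I would argue by a pumping argument on $\M^{(2)}$ (combined with commutativity of convolution, which makes $\mu_s(w)$ depend only on the multiset of visited states) that a minimal witness has length polynomial in $|S|$.

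\emph{Main obstacle.} The hard part is rigorously executing the length bound in the ``both paths always exist'' case. Excising a cycle of $\M^{(2)}$ simultaneously perturbs the discrete factor $p_s(w)/p_{s'}(w)$ and the multisets of states governing $\mu_s(w), \mu_{s'}(w)$, so no single pumping step is manifestly witness-preserving. I expect to combine cycle manipulation with a finite-dimensional argument in the spirit of rank-based minimisation for unambiguous probabilistic automata (cf.~\cite{FRW17}): if no word up to length polynomial in $|S|$ witnesses the violation, then the family of signed measures $t \mapsto p_{s'}(w)\mu_{s'}(w)(t) - p_s(w)\mu_s(w)(t)$ indexed by $w$ is confined to a cone closed under the pair-DFA reductions, forcing the inequality to hold for all $w$ and contradicting $s \not\ft s'$.
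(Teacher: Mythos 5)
Your overall plan---guess a polynomial-length word $w$ and verify directly that $\P_\M(s,w,t) < \P_\M(s',w,t)$ for some $t$---has a genuine gap that cannot be repaired: when $s \not\ft s'$, a \emph{direct} witness word of polynomial length need not exist. The paper's characterization (Proposition~\ref{prop:unambiguous}) is that $s \not\ft s'$ holds iff either some word of length at most $S^2$ violates the inequality, \emph{or} some reachable loop pair $(p,p',v)$ with $|v| \le S^2$ satisfies $\P_\M(p,v) \not\ge \P_\M(p',v)$. In the second case the actual violating words are of the form $w v^n$, and the proof that such a word exists is a limiting argument ($n \to \infty$, after rescaling time by $n$); the smallest such $n$ depends on the quantitative data of the timing distributions, not just on the structure of $\M$. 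Concretely, take Dirac residence times: $\rho$ along the $s$-prefix accumulates $0$, along the $s'$-prefix accumulates $100$, while one traversal of the loop accumulates $1$ on the $p$-side and $0.99$ on the $p'$-side. Then $s \not\ft s'$, but the shortest word witnessing it has length on the order of $10^4$, and by scaling the parameters (still polynomially many bits of input) the shortest direct witness becomes super-polynomial. So the certificate for the complement cannot be a single short word; it must be allowed to be a \emph{loop certificate} $(p,p',v)$ whose violation $\P_\M(p,v) \not\ge \P_\M(p',v)$ is checked in isolation, not embedded in a word from $(s,s')$. This is exactly the idea your proposal is missing, and it is why your ``main obstacle'' paragraph cannot be completed along the lines you sketch: excising cycles is not the issue, the issue is that cycles must sometimes be \emph{pumped unboundedly}, which forces a different kind of witness.

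Your other ingredients are sound and match the paper's: the decomposition $\P_\M(s,w) = p_s(w)\cdot\mu_s(w)$ with $\mu_s(w) \in \Conv(\C)$ a convolution along the unique path (this is the paper's Lemma~\ref{lem:convC} specialized to the unambiguous case, and commutativity of convolution is indeed the engine of the converse direction, Lemma~\ref{lem:easy_direction_unambiguous}); the product construction on $S^2$ to bound prefix and cycle lengths by $S^2$; and the appeal to effectiveness of $\C$ with $\varepsilon = 0$, $b = \infty$ to decide each comparison. With the corrected certificate (either a short word, or a pair of states plus a short loop word, both of quadratic length), these pieces assemble into the paper's $\coNP$ algorithm.
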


Theorem~\ref{thm:unambiguous} follows from the next proposition.

\begin{proposition}\label{prop:unambiguous}
  Consider an unambiguous semi-Markov process $\M$ and two states $s,s'$.
  Let $L(s,s')$ be the set of loops reachable from $(s,s')$:
  \[
  \set{(p,p',v) \in S^2 \times \aout^{\le S^2} 
  \left|
  \ \exists w \in \aout^{\le S^2},\ 
  \begin{array}{c}
  T(s,w) = p,\ T(s',w) = p',\\
  T(p,v) = p,\ T(p',v) = p'
  \end{array}
  \right.} \enspace.
  \]

  We have $s \ft s'$ if, and only if
  \begin{itemize}
	  \item for all $w$ in $\aout^{\le S^2}$, we have $\P_\M(s,w) \ge \P_\M(s',w)$, and
	  \item for all $(p,p',v)$ in $L(s,s')$, we have $\P_\M(p, v) \ge \P_\M(p',v)$.
  \end{itemize} 
\end{proposition}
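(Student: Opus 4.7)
My plan is to prove the two directions separately. For the ``only if'' direction, condition~1 is a direct restriction of $s \ft s'$ to words of length at most $|S|^2$. For condition~2, given $(p,p',v) \in L(s,s')$ with witness $w$, I would apply $s \ft s'$ to the iterated words $wv^n$; comparing $\P_\M(s,wv^n)$ with $\P_\M(s',wv^n)$ and letting $n \to \infty$ would force $\P_\M(p,v) \ge \P_\M(p',v)$ by an asymptotic analysis of both the total-mass decay rate and the shape of the iterated convolutions.

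The substantive direction is ``if''. Assume the two conditions and show $\P_\M(s,w) \ge \P_\M(s',w)$ for all $w \in \aout^*$ by strong induction on $|w|$. The base case $|w| \le |S|^2$ is immediate from condition~1. For $|w| > |S|^2$, I consider the first $|S|^2 + 1$ prefixes of $w$, each determining a pair $(T(s,u), T(s',u))$. If either $T$ becomes undefined along the way, then condition~1 forces the corresponding $\P_\M(s',\cdot)$ to vanish on that prefix, and the inequality becomes trivial. Otherwise, pigeonhole yields two prefixes mapping to the same pair, giving a decomposition $w = uvz$ with $|u|,|uv| \le |S|^2$, $v \neq \varepsilon$, and $v$ looping at both $p := T(s,u)$ and $p' := T(s',u)$, so that $(p,p',v) \in L(s,s')$.

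The crux is the unambiguity-driven factorization
\[
  \P_\M(s,uvz) = \P_\M(s,uz) * \P_\M(p,v), \qquad \P_\M(s',uvz) = \P_\M(s',uz) * \P_\M(p',v).
\]
To establish it, I would write each $\P_\M(\cdot,\cdot)$ as a scalar (the product of transition probabilities along the unique trace, available because $\M$ is unambiguous) times a convolution of residence-time distributions along that trace, and then use commutativity of convolution to regroup the $v$-loop's residence-time contribution into a separate factor. Applying the induction hypothesis to $uz$ (strictly shorter than $w$) yields $\P_\M(s,uz) \ge \P_\M(s',uz)$, while condition~2 applied to $(p,p',v)$ yields $\P_\M(p,v) \ge \P_\M(p',v)$. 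Combining the two requires monotonicity of convolution under CDF-dominance: $\mu_1 \ge \mu_2$ and $\nu_1 \ge \nu_2$ imply $\mu_1 * \nu_1 \ge \mu_2 * \nu_2$, which is the standard closure of stochastic dominance under independent sums, extended to sub-probability measures by treating missing mass as a point at $+\infty$.

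The main obstacle will be the bookkeeping behind the factorization: one must verify that the multiset of residence-time distributions convolved along the trace of $uvz$ from $s$ coincides with that along $uz$ together with that along the $v$-loop at $p$, so that commutativity of convolution yields the claimed equality. Once this algebraic identity is in place, the inductive step is essentially mechanical.
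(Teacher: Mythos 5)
Your proposal follows essentially the same route as the paper: the forward direction extracts the loop inequality by iterating $wv^n$ and passing to the limit, and the converse uses the pigeonhole decomposition $w = uvz$, commutes the loop's convolution factor out via associativity/commutativity of $*$, and closes with the induction hypothesis on the shorter word $uz$ together with monotonicity of convolution under stochastic dominance. The only differences are cosmetic: you treat undefined traces explicitly (the paper leaves this implicit) and leave the $n \to \infty$ step as a sketch, whereas the paper carries it out concretely by rescaling time by $1/n$ and taking the limit of the rescaled inequality.
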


Before going into the proof, we explain how to use Proposition~\ref{prop:unambiguous}
to construct an algorithm solving the faster-than problem over unambiguous semi-Markov processes.
\begin{enumerate}
	\item The first step is to compute $L(s,s')$, which can be done in polynomial time using a simple graph analysis,
	\item The second step is to check the two properties, which both can be reduced 
	to exponentially many queries of the form: $\mu_1 \ge \mu_2$ for $\mu_1,\mu_2$ in $\Conv(\C)$.
\end{enumerate}
To obtain a $\coNP$ algorithm, in the second step we guess which of the two properties is not satisfied and a witness of polynomial length, 
which is either a word of quadratic length for the first property,
or two states and a word of quadratic length for the second property.

We split the proof of Proposition~\ref{prop:unambiguous} into two lemmas, each proving one direction of the proposition.
The following lemma gives the first direction.

\begin{lemma}
  \label{lem:hard_direction_unambiguous}
  If $s \ft s'$, then, for all $(p,p',v) \in L(s,s')$, $\P_\M(p, v) \ge \P_\M(p', v)$.
\end{lemma}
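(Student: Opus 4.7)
The plan is to exploit the unambiguous structure of $\M$ to obtain a clean decomposition of $\P_\M(s, wv^n)$ and $\P_\M(s', wv^n)$, and then to extract pointwise information about $F := \P_\M(p,v)$ and $G := \P_\M(p',v)$ by letting the loop count $n$ vary.

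First, since $(p,p',v) \in L(s,s')$, I fix a witness word $w$ satisfying $T(s,w) = p$ and $T(s',w) = p'$. Because $\M$ is unambiguous, the inductive definition of $\P_\M$ unfolds along a unique path from each starting state; using associativity and commutativity of convolution, this yields, for every $n \ge 0$,
\[
  \P_\M(s, wv^n) = \alpha_s \cdot (B_s * F^{*n}), \qquad \P_\M(s', wv^n) = \alpha_{s'} \cdot (B_{s'} * G^{*n}),
\]
where $\alpha_s, \alpha_{s'}$ are products of transition probabilities along the unique $w$-paths, and $B_s, B_{s'}$ are the convolutions of residence-time distributions at the non-terminal states along those paths. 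Crucially, $\alpha_s, \alpha_{s'}, B_s, B_{s'}$ depend only on $w$, so all of the $n$-dependence is concentrated in $F^{*n}$ and $G^{*n}$. Applying $s \ft s'$ to each word $wv^n$ then yields the pointwise family
\[
  \alpha_s \,(B_s * F^{*n})(t) \;\ge\; \alpha_{s'}\,(B_{s'} * G^{*n})(t) \qquad \text{for all } n \ge 0 \text{ and } t \ge 0.
\]

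Next, I would take Laplace--Stieltjes transforms in $t$. Since the integrating kernel $e^{-\lambda t}$ is non-negative, the inequality is preserved, and convolution becomes multiplication:
\[
  \alpha_s \, \hat B_s(\lambda)\, \hat F(\lambda)^n \;\ge\; \alpha_{s'}\, \hat B_{s'}(\lambda)\, \hat G(\lambda)^n \qquad (n \ge 0,\ \lambda \ge 0).
\]
Taking $n$-th roots and letting $n \to \infty$ makes the $n$-independent prefactor tend to $1$, delivering the Laplace ordering $\hat F(\lambda) \ge \hat G(\lambda)$ for every $\lambda \ge 0$.

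The main obstacle, and crux of the proof, is to promote this Laplace ordering to the pointwise ordering $F(t) \ge G(t)$ asserted by the lemma, since Laplace ordering does not in general imply the usual stochastic ordering. To complete the argument I plan to use the \emph{full} pointwise family displayed above rather than merely its integrated Laplace counterpart: arguing by contradiction, suppose $F(t_0) < G(t_0)$ at some $t_0$, and choose $n$ large together with $t$ commensurate with the mean $n \mu_F$ of the $n$-fold convolution, so that central-limit-style concentration of $F^{*n}$ and $G^{*n}$ around their respective means magnifies the local deficit of $F$ relative to $G$ into a pointwise violation of the family above that the fixed smoothing kernels $B_s, B_{s'}$ cannot mask, contradicting $s \ft s'$.
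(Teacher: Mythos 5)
Your setup is correct and coincides with the paper's: unambiguity gives the decomposition $\P_\M(s,wv^n) = \P_\M(s,w) * F^{*n}$ (with $F = \P_\M(p,v)$), and instantiating $s \ft s'$ at the words $wv^n$ gives the pointwise family of inequalities. The Laplace step is also sound, although the reason the inequality survives is not merely that $e^{-\lambda t}$ is non-negative but that it is non-negative \emph{and decreasing} (by Fubini, $\int e^{-\lambda x}\mu(\wrt{x}) = \lambda\int_0^\infty e^{-\lambda t}\mu(t)\,dt$, so the transform is monotone in the CDF ordering); the $n$-th root trick then does yield $\hat F(\lambda) \ge \hat G(\lambda)$, provided one notes that $\alpha_s\hat B_s(\lambda)>0$. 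You are also right that this is exactly where the difficulty starts: Laplace ordering is strictly weaker than the pointwise CDF ordering the lemma asserts. The problem is that your plan for closing that gap cannot work. Convolution powers \emph{erase} pointwise information about $F$ rather than magnify it: in the concentration regime the central limit theorem gives $F^{*n}(n\mu_F + \sqrt{n}\,\sigma_F z) \to \Phi(z)$, so to leading order $F^{*n}$ near its bulk depends on $F$ only through its mean and variance (and, at finer scales, higher cumulants). If $F$ and $G$ share mean and variance but $F(t_0) < G(t_0)$ at some point, their convolution powers become asymptotically indistinguishable at the CLT scale, so no choice of $n$ and $t$ ``commensurate with $n\mu_F$'' exposes the deficit. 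Worse, the moment data that does survive in $F^{*n}$ is precisely what the Laplace transform already encodes, so this route is circular: it can never deliver more than the ordering of $\hat F$ and $\hat G$ that you obtained in the previous step.

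The paper's proof takes a different and transform-free route: it writes the hypothesis as $\P(X_{s,w} + nX_{p,v} \le t) \ge \P(Y_{s',w} + nY_{p',v} \le t)$, evaluates it at $t = nx$, and lets $n \to \infty$ so that the prefix contribution $X_{s,w}/n$ vanishes, leaving $\P(X_{p,v} \le x) \ge \P(Y_{p',v} \le x)$ directly. You should be aware, however, that this argument leans on writing the $n$ loop traversals as $nX_{p,v}$, i.e.\ $n$ times a \emph{single} copy of the loop time, whereas the $n$-fold convolution $F^{*n}$ is the law of a sum of $n$ \emph{independent} copies; under the latter (correct) reading, dividing by $n$ produces an empirical mean that converges to $\mathrm{E}[X_{p,v}]$ by the law of large numbers, not to $X_{p,v}$ itself. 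So the step you flagged as the crux is genuinely the crux, and neither your concentration sketch nor a literal reading of the rescaling argument resolves it; a complete proof has to extract the full distribution of a single loop traversal from the family $\set{\P_\M(s,w)*F^{*n} \ge \P_\M(s',w)*G^{*n}}_{n\ge 0}$, and as it stands your proposal does not do this.
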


\begin{proof}
  Assume that $s$ is faster than $s'$ and let $(p,p')$ be in $L(s,s')$.
  There exist $w,v$ in $\aout^*$ such that 
  $T(s,w) = p,\ T(s',w) = p',\ T(p,v) = p,\ T(p',v) = p'$.
  Let $n$ in $\N$.
  Since $s$ is faster than $s'$, we have $\P_\M(s,w v^n) \ge \P_\M(s',w v^n)$.
  We have 
  \begin{align*}
    \P_\M(s, w v^n) &= \P_\M(s,w) * \underbrace{\P_\M(p,v) * \dots * \P_\M(p,v)}_{n \text{ times}} \\
    \P_\M(s',w v^n) &= \P_\M(s',w) * \underbrace{\P_\M(p',v) * \dots * \P_\M(p',v)}_{n \text{ times}} \enspace.
  \end{align*}
  Let $X_{s,w}$ be the random variable measuring the time elapsed from $s$ emitting $w$.
  Similarly, we define $X_{p,v}, Y_{s',w}$ and $Y_{p',v}$.
  We have: for all $n$ in $\N$, for all $t$,
  \[\P_\M(X_{s,w} + n X_{p,v} \leq t) \geq \P_\M(Y_{s',w} + n Y_{p',v} \leq t) \enspace,\]
  Dividing both sides by $n$ yields
  \[\P_\M \left(\frac{X_{s,w}}{n} + X_{p,v} \leq \frac{t}{n} \right) \geq \P_\M \left(\frac{Y_{s',w}}{n} + Y_{p',v} \leq \frac{t}{n} \right) \enspace.\]
  We make the change of variables $x = \frac{t}{n}$: for all $n$ in $\N$, for all $x$ we have
  \[\P_\M \left(\frac{X_{s,w}}{n} + X_{p,v} \leq x\right) \geq \P_\M \left(\frac{Y_{s',w}}{n} + Y_{p',v} \leq x\right) \enspace.\]
  Letting $n \rightarrow \infty$, we then obtain, for all $x$
  \[\P_\M(X_{p,v} \leq x) \geq \P_\M(Y_{p',v} \leq x) \enspace,\]
  which is equivalent to $\P_\M(p,v) \ge \P_\M(p',v)$.
  \qed
\end{proof}

The following lemma gives the converse implication of Proposition~\ref{prop:unambiguous}.

\begin{lemma}\label{lem:easy_direction_unambiguous}
  Assume that
  \begin{itemize}
	  \item for all $w$ in $\aout^{\le S^2}$, we have $\P_\M(s,w) \ge \P_\M(s',w)$, and
	  \item for all $(p,p',v)$ in $L(s,s')$, we have $\P_\M(p, v) \ge \P_\M(p', v)$.
  \end{itemize} 
  Then $s \ft s'$.
\end{lemma}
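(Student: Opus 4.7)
The plan is to induct on $|w|$ to extend the two local hypotheses to every word. The base case $|w| \le |S|^2$ is immediate from the first hypothesis. For $|w| > |S|^2$, I will combine a pigeonhole argument on paired trajectories with a factorization lemma specific to unambiguous processes: whenever $q = T(s, w_1)$ is defined, one has
\[
\P_\M(s, w_1 w_2) = \P_\M(s, w_1) * \P_\M(q, w_2).
\]
This lemma follows by a direct induction on $|w_1|$, since unambiguity collapses the sum in the recursive definition of $\P_\M$ to a single term, so the residence-time distributions telescope into one convolution and the discrete transition weights multiply through.

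For the inductive step, consider the first $|S|^2 + 1$ prefixes $u_0, \dots, u_{|S|^2}$ of $w$ and the pairs $(T(s, u_i), T(s', u_i))$. If some $T(s, u_i)$ is undefined, then $\P_\M(s, u_i) = 0$; by the base-case hypothesis applied to $u_i$ the measure $\P_\M(s', u_i)$ vanishes as well, and since $u_i$ is a prefix of $w$ one has $\P_\M(s', w)(t) \le \P_\M(s', u_i)(t) = 0$ for every $t$, so the inequality is trivial. The symmetric case, $T(s', u_i)$ undefined, is even easier. Otherwise all $|S|^2 + 1$ pairs lie in $S \times S$, so pigeonhole yields $i < j \le |S|^2$ with $(T(s, u_i), T(s', u_i)) = (T(s, u_j), T(s', u_j)) = (p, p')$. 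Writing $w = u_i v w'$ with $v$ of length $j - i \in [1, |S|^2]$, the triple $(p, p', v)$ lies in $L(s, s')$.

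Two applications of the factorization lemma give
\[
\P_\M(s, w) = \P_\M(s, u_i) * \P_\M(p, v) * \P_\M(p, w'),
\]
and analogously $\P_\M(s', w) = \P_\M(s', u_i) * \P_\M(p', v) * \P_\M(p', w')$. Since $|u_i w'| = |w| - |v| < |w|$, the induction hypothesis together with one more use of the factorization yields $\P_\M(s, u_i) * \P_\M(p, w') \ge \P_\M(s', u_i) * \P_\M(p', w')$, while the second hypothesis of the proposition gives $\P_\M(p, v) \ge \P_\M(p', v)$. Two successive invocations of the monotonicity of convolution with respect to the pointwise CDF ordering -- the standard stochastic-dominance argument for sums of independent nonnegative random variables, applied with a coupling that accommodates the possibility of mass at $+\infty$ -- then deliver $\P_\M(s, w) \ge \P_\M(s', w)$.

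The main obstacle is handling prefixes along which $T(s, u_i)$ is undefined, so that the factorization cannot be applied directly; this is sidestepped by leaning on the base-case hypothesis, which forces the matching vanishing on the $s'$-side. A secondary point to check is that the pointwise ordering between \emph{sub}-probability measures is preserved by convolution, but the stochastic-dominance interpretation of their CDFs makes this immediate.
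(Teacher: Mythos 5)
Your proof follows essentially the same route as the paper's: induction on $|w|$, a pigeonhole argument on the pairs of states reached along prefixes to extract a loop $(p,p',v) \in L(s,s')$, factorization of $\P_\M$ over the decomposition $w = u_i\, v\, w'$, and commutativity plus monotonicity of convolution to combine the induction hypothesis on the shorter word $u_i w'$ with the loop hypothesis. You are in fact somewhat more careful than the paper, which leaves implicit the factorization lemma for unambiguous processes, the preservation of the pointwise CDF order under convolution, and the degenerate case where $T(s,u_i)$ or $T(s',u_i)$ is undefined.
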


\begin{proof}
  We prove that for all $w$, we have $\P_\M(s,w) \ge \P_\M(s',w)$ by induction on the length of $w$.

  For $w$ of length at most $S^2$, this is ensured by the first assumption.
  Let $w$ be a word longer than $S^2$.
  There exist two states $p,p'$ such that $p$ is reached by $s$ and $p'$ by $s'$
  after emitting $i$ letters of $w$ and again after emitting $j$ letters of $w$,
  with $j$ at most $S^2$.
  Let $w = w_1\ v\ w_2$ where $v$ starts at position $i$ and ends at position $j$. 
  By construction $(p,p',v)$ is in $L(s,s')$.
  We have
  \begin{align*}
  \P_\M(s,w) &= \P_\M(s,w_1) * \P_\M(p,v) * \P_\M(p,w_2) \\
		         &= \P_\M(s,w_1) * \P_\M(p,w_2) * \P_\M(p,v) \\
		         &= \P_\M(s,w_1 w_2) * \P_\M(p,v) \\
		         &\ge \P_\M(s',w_1 w_2) * \P_\M(p',v) \\
		         &= \P_\M(s',w_1) * \P_\M(p',w_2) * \P_\M(p',v) \\
		         &= \P_\M(s',w_1) * \P_\M(p',v) * \P_\M(p',w_2) \\
             &= \P_\M(s',w) \enspace.
  \end{align*}
  The equalities use the associativity and commutativity of the convolution.
  The inequality $\P_\M(s,w_1 w_2) \ge \P_\M(s',w_1 w_2)$ holds by induction hypothesis, 
  because $w_1 w_2$ is shorter than $w$.
  The inequality $\P_\M(p,v) \ge \P_\M(p',v)$ holds thanks to the second assumption.
  \qed
\end{proof}


\section{Conclusion and Open Problems}
We studied the model of semi-Markov processes where the timing behaviour can be described by arbitrary timing distributions.
We have introduced a trace-based relation called the faster-than relation which asks that for any prefix and any time bound,
the probability of outputting a word with that prefix within the time bound is higher in the faster process than in the slower process.
We have shown through a connection to probabilistic automata that the faster-than relation is highly undecidable. 
It is undecidable in general, and remains Positivity-hard even for one output label.
Furthermore, approximating the faster-than relation up to a multiplicative constant is impossible.

However, we constructed algorithms for special cases of the faster-than problem.
We have shown that if one considers approximating up to an additive constant rather than a multiplicative constant,
and if one gives a bound on the time up to which one is interested in comparing the two processes,
then approximation can be done for timing distributions in which we are sure to spend some amount of time to take a transition.
In addition, we have shown that the faster-than relation is decidable and in $\coNP$ for unambiguous processes,
in which there is a unique successor state for every output label.

In this paper, we have focused on the generative model,
where the labels are treated as outputs.
An alternative viewpoint is the reactive model,
where the labels are instead treated as inputs~\cite{GSS95}.
While all the undecidability and hardness results we have shown
can also easily be shown to hold for the reactive case,
the same is not true for the algorithms we have constructed.
It is non-trivial to extend these algorithms to the reactive case,
and the main obstacle in doing so is that for reactive systems,
one has to also handle schedulers, often uncountably many.
It is therefore still an open question whether our decidability results carry over to reactive systems.


\bibliography{bib}

\end{document}